\providecommand{\U}[1]{\protect\rule{.1in}{.1in}}
\newtheorem{theorem}{Theorem}
\newtheorem{proposition}[theorem]{Proposition}
\newtheorem{remark}[theorem]{Remark}
\newenvironment{proof}[1][Proof]{\noindent\textbf{#1.} }{\ \rule{0.5em}{0.5em}}
\begin{document}
\title{Strong and uniform convergence in the teleportation simulation of\\bosonic Gaussian channels}
\author{Mark M. Wilde}
\affiliation{Hearne Institute for Theoretical Physics, Department of Physics and Astronomy,
Center for Computation and Technology, Louisiana State University, Baton
Rouge, Louisiana 70803, USA}

\begin{abstract}
In the literature on the continuous-variable bosonic teleportation protocol
due to [Braunstein and Kimble, Phys.~Rev.~Lett., 80(4):869, 1998], it is often
loosely stated that this protocol converges to a perfect teleportation of an
input state in the limit of ideal squeezing and ideal detection, but the exact
form of this convergence is typically not clarified. In this paper, I
explicitly clarify that the convergence is in the strong sense, and not the
uniform sense, and furthermore, that the convergence occurs for any input
state to the protocol, including the infinite-energy Basel states defined and
discussed here. I also prove, in contrast to the above result, that the
teleportation simulations of pure-loss, thermal, pure-amplifier, amplifier,
and additive-noise channels converge both strongly and uniformly to the
original channels, in the limit of ideal squeezing and detection for the
simulations. For these channels, I give explicit uniform bounds on the
accuracy of their teleportation simulations. I then extend these uniform convergence results
to particular multi-mode bosonic Gaussian channels. These
convergence statements have important implications for mathematical proofs
that make use of the teleportation simulation of bosonic Gaussian channels,
some of which have to do with bounding their non-asymptotic
secret-key-agreement capacities. As a byproduct of the discussion given here,
I confirm the correctness of the proof of such bounds from my joint work with
Berta and Tomamichel from [Wilde, Tomamichel, Berta, IEEE~Trans.~Inf.~Theory
63(3):1792, March 2017].
%, which has
%recently been questioned in the literature.
Furthermore,
%contrary to some
%recent claims in the literature,
I show that it is not necessary to invoke the energy-constrained diamond
distance in order to confirm the correctness of this proof.

\end{abstract}
\date{\today}
\maketitle

%\affiliation{Hearne Institute for Theoretical Physics, Department of Physics and Astronomy,
%Center for Computation and Technology, Louisiana State University, Baton
%Rouge, Louisiana 70803, USA}

\section{Introduction}

The quantum teleportation protocol is one of the most powerful primitives in
quantum information theory \cite{PhysRevLett.70.1895}. By sharing entanglement
and making use of a classical communication link, a sender can transmit an
arbitrary quantum state to a receiver. In resource-theoretic language, the
resources of a maximally entangled state of two qubits
\begin{equation}
|\Phi^{+}\rangle_{AB}=(|00\rangle_{AB}+|11\rangle_{AB})/\sqrt{2}%
\end{equation}
and two classical bit channels can be used to simulate an ideal qubit channel
from the sender to the receiver \cite{Bennett04,DHW05RI}. Generalizing this, a
maximally entangled state of two qudits
\begin{equation}
|\Phi_{d}\rangle_{AB}=d^{-1/2}\sum_{i=0}^{d-1}|i\rangle_{A}|i\rangle_{B}%
\end{equation}
and two classical channels, each of dimension $d$, can be used to simulate an
ideal $d$-dimensional quantum channel \cite{PhysRevLett.70.1895}. The
teleportation primitive has been extended in multiple non-trivial ways,
including a method to simulate an unideal channel using a noisy, mixed
resource state \cite[Section~V]{BDSW96} (see also
\cite{HHH99,WPG07,NFC09,Mul12}) and as a way to implement nonlocal quantum
gates \cite{GC99,STM11}. The former extension has been used to bound the rates
at which quantum information can be conveyed over a quantum channel assisted
by local operations and classical communication (LOCC) \cite{BDSW96,Mul12},
and more generally, as a way to reduce a general LOCC-assisted protocol to one
that consists of preparing a resource state followed by a single round of
LOCC~\cite{BDSW96,Mul12}.

\begin{figure}[ptb]
\begin{center}
\includegraphics[
width=3in
]{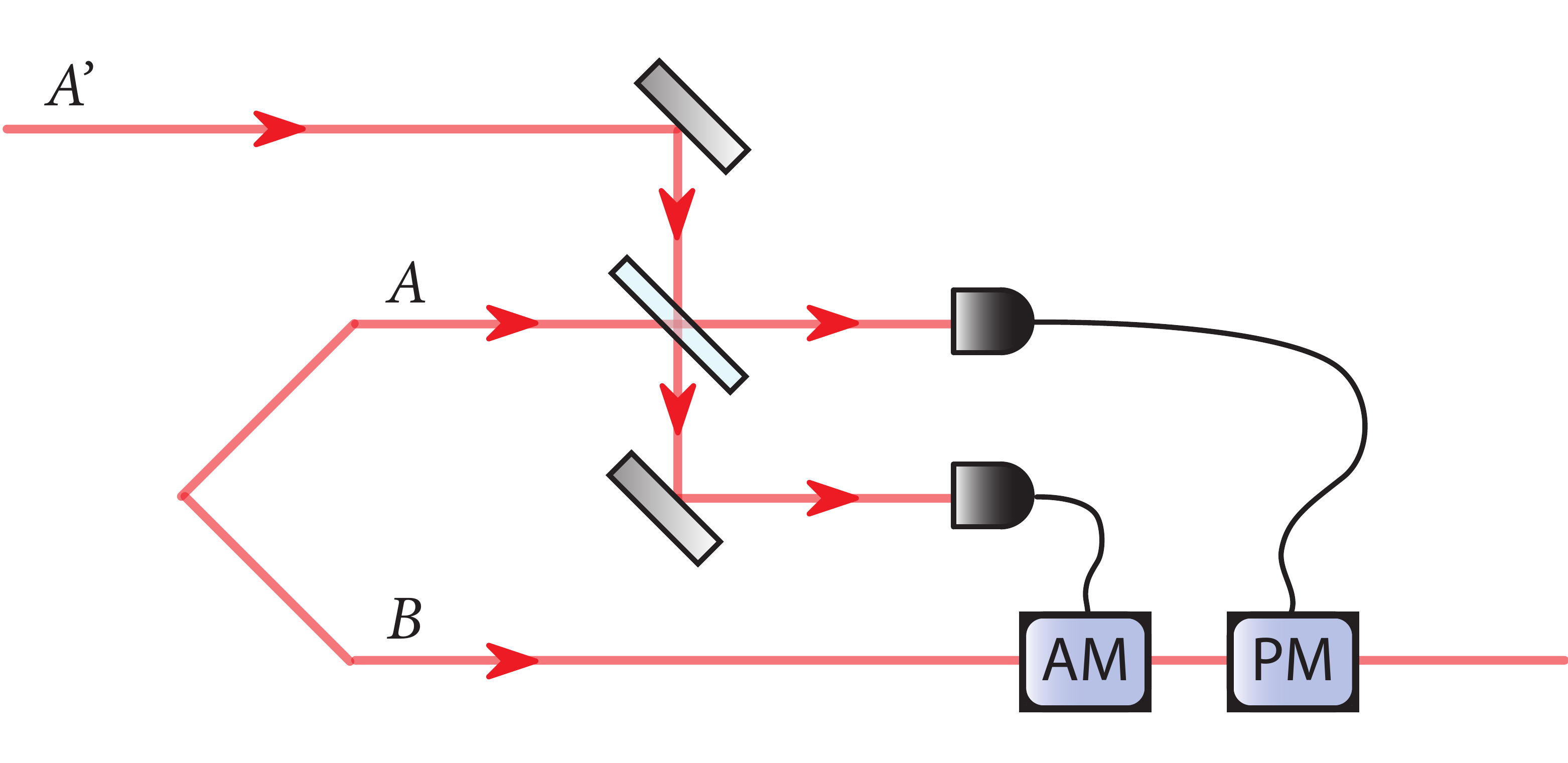}
\end{center}
\caption{Depiction of the bosonic continuous-variable teleportation protocol
from \cite{prl1998braunstein}, described in the main text. ``AM'' and ``PM''
denote amplitude and phase modulators, which implement the displacement
operator needed in the teleportation protocol.}%
\label{fig:CV-tele}\end{figure}

Due in part to the large experimental interest in bosonic continuous-variable
quantum systems, given their practical applications \cite{adesso14,S17}, the
teleportation protocol was extended to this paradigm \cite{prl1998braunstein} (see Figure~\ref{fig:CV-tele}).
The standard protocol begins with a sender and receiver sharing a two-mode
squeezed vacuum state of the following form:%
\begin{equation}
|\Phi(N_{S})\rangle_{AB}\equiv\frac{1}{\sqrt{N_{S}+1}}\sum_{n=0}^{\infty
}\left(  \sqrt{\frac{N_{S}}{N_{S}+1}}\right)  ^{n}|n\rangle_{A}|n\rangle_{B} ,
\label{eq:TMSV}%
\end{equation}
where $N_{S}\in\lbrack0,\infty)$ represents the squeezing strength and
$\{|n\rangle\}_{n}$ denotes the photon-number basis. Suppose that the goal is
to teleport a mode $A^{\prime}$. The sender mixes, on a 50-50 beamsplitter,
the mode $A^{\prime}$ with the mode $A$ of the state in \eqref{eq:TMSV}.
Afterward, the sender performs homodyne detection on the modes emerging from
the beamsplitter and forwards the measurement results over classical channels
to the receiver, who possesses mode $B$ of the above state. Finally, the
receiver performs a unitary displacement operation on mode $B$. In the limit
as $N_{S}\rightarrow\infty$ and in the limit of ideal homodyne detection, this
continuous-variable teleportation protocol is often loosely stated in the
literature to simulate an ideal channel on any state of the mode $A^{\prime}$,
such that this state is prepared in mode $B$ after the protocol is finished.
Due to the lack of a precise notion of convergence being given, there is the
potential for confusion regarding mathematical proofs that make use of the
continuous-variable teleportation protocol.

With this in mind, one purpose of the present paper is to clarify the precise
kind of convergence that occurs in continuous-variable quantum teleportation,
which is typically not discussed in the literature on this topic. In
particular, I prove that the convergence is in the strong topology sense, and
not in the uniform topology sense (see, e.g., \cite[Section~3]{SH08} for
discussions of these notions of convergence). I then show how to extend this
strong convergence result to the teleportation simulation of $n$ parallel
ideal channels, and I also show how this strong convergence extends to the
teleportation simulation of $n$ ideal channels that could be used in any context.

Strong convergence and uniform convergence are then discussed for the
teleportation simulation of bosonic Gaussian channels. For this latter case,
and in contrast to the result discussed above for the continuous-variable
teleportation protocol, I prove that the teleportation simulations of the
pure-loss, thermal, pure-amplifier, amplifier, and additive-noise channels
converge both strongly and uniformly to the original channels, in the limit of
ideal squeezing and detection for the simulations. Here I give explicit
uniform bounds on the accuracy of the teleportation simulations of these
channels, and I suspect that these bounds will be useful in future applications.
After this development, I then extend these uniform convergence results to particular multi-mode bosonic Gaussian channels.

These convergence results are important, even if they might be implicit in
prior works, as they provide meaningful clarification of mathematical proofs
that make use of teleportation simulation, such as those given in recent work
on bounding non-asymptotic secret-key-agreement capacities. In particular, one
can employ these convergence statements to confirm the correctness of the
proof of such bounds given in my joint work with Berta and Tomamichel from
\cite{WTB16}.
%, which has recently been questioned in the literature \cite{P17}.
Furthermore, these strong convergence statements can be used to conclude that
%to dispel a
%recent claim from \cite{P17}, in which it is stated that
the energy-constrained diamond distance is \textit{not necessary} to arrive at
a proof of the bounds from \cite{WTB16}.
%The uniform convergence statements proven here for the
%aforementioned bosonic Gaussian channels also reveal an incorrect calculation
%in Appendix~A of \cite{P17}, which in turn dispels that paper's main criticism
%of \cite{WTB16}.
Another byproduct of the discussion given in the present paper is that it is
clarified that the methods of \cite{WTB16} allow for bounding secret-key rates
of rather general protocols that make use of infinite-energy states, such as
the Basel states in \eqref{eq:entangled-Basel-state} and
\eqref{eq:basel-state-sep}. Although there should be great skepticism
concerning whether these infinite-energy Basel states could be generated in
practice, this latter byproduct is nevertheless of theoretical interest.

The rest of the paper proceeds as follows. In the next section, I discuss the
precise form of convergence that occurs in continuous-variable quantum
teleportation and then develop various extensions of this notion of
convergence. I then prove that teleportation simulations of the pure-loss,
thermal, pure-amplifier, amplifier, and additive-noise channels converge both
strongly and uniformly to the original channels in the limit of ideal
squeezing and detection for the simulations. The uniform convergence results are then extended to the teleportation simulations of particular multi-mode bosonic Gaussian channels.  Section~\ref{sec:TP-game} gives a
physical interpretation of the aforementioned convergence results, by means of
the CV Teleportation Game. After that, Section~\ref{sec:non-asymp-SKC}%
\ briefly reviews what is meant by a secret-key-agreement protocol and
non-asymptotic secret-key-agreement capacity. Finally, in
Section~\ref{sec:proof-review}, I review the proof of \cite[Theorem~24]{WTB16}
and carefully go through some of its steps therein, confirming its
correctness, while showing how the strong convergence of teleportation
simulation applies. In Section~\ref{sec:conclusion}, I conclude with a brief
summary and a discussion.

\section{Notions of quantum channel convergence, with applications to
teleportation simulation}

One main technical issue discussed in this paper is how the
continuous-variable bosonic teleportation protocol from
\cite{prl1998braunstein}\ converges to an identity channel in the limit of
infinite squeezing and ideal detection. This issue is often not explicitly
clarified in the literature on the topic, even though it has been implicit for
some time in various works that the convergence is to be understood in the
strong sense (topology of strong convergence), and \textit{not} necessarily
the uniform sense (topology of uniform convergence) (see, e.g.,
\cite[Section~3]{SH08}). For example, in the original paper
\cite{prl1998braunstein}, the following statement is given regarding this issue:

\begin{quote}
\textquotedblleft Clearly, for $r\rightarrow\infty$ the teleported state of
Eq.~(4) reproduces the original unknown state.\textquotedblright
\end{quote}

\noindent Although it is clear that this statement implies convergence in the
strong sense, it could be helpful to clarify this point,
%, especially since
%Pirandola \textit{et al}.~\cite{P17} have recently misunderstood convergence
%issues in the context of the proof of \cite[Theorem~24]{WTB16}.
and the purpose of this section is to do so.

In what follows, I first recall the definitions of strong and uniform
convergence from \cite[Section~3]{SH08}. I then discuss the precise form of
convergence that occurs in continuous-variable bosonic teleportation and show
how strong convergence and uniform convergence are extremely different in the
setting of continuous-variable teleportation. After that, I prove that strong
convergence of a channel sequence implies strong convergence of $n$-fold
tensor powers of these channels and follow this with a proof that strong
convergence of a channel sequence implies strong convergence of $n$ uses of
these channels in any context in which they could be invoked. I also prove
that the teleportation simulations of pure-loss, thermal, pure-amplifier,
amplifier, and additive-noise channels converge both strongly and uniformly to
the original channels, in the limit of ideal squeezing and detection for the simulations. The uniform convergence results are then extended to the teleportation simulations of particular multi-mode bosonic Gaussian channels.

\subsection{Definitions of strong and uniform convergence}

Before discussing the precise statement of convergence in the
continuous-variable bosonic teleportation protocol, let us begin by recalling
general definitions of strong and uniform convergence from \cite[Section~3]%
{SH08}. I adopt slightly different definitions from those given in
\cite[Section~3]{SH08}, in order to suit the needs of the present paper, but
note that they are equivalent to the original definitions as shown in
\cite{SH08} and \cite[Lemma~2]{Sh17}. In this context, see also \cite{Sh17a}.
Let $\{\mathcal{N}_{A\rightarrow B}^{k}\}_{k}$ denote a sequence of quantum
channels (completely positive, trace-preserving maps), which each accept as
input a trace-class operator acting on a separable Hilbert space
$\mathcal{H}_{A}$\ and output a trace-class operator acting on a separable
Hilbert space $\mathcal{H}_{B}$. This sequence converges \textit{strongly} to
a channel $\mathcal{N}_{A\rightarrow B}$ if for all density operators
$\rho_{RA}$ acting on $\mathcal{H}_{R}\otimes\mathcal{H}_{A}$, where
$\mathcal{H}_{R}$ is an arbitrary, auxiliary separable Hilbert space, the
following limit holds%
\begin{equation}
\lim_{k\rightarrow\infty}\varepsilon(k,\rho_{RA})=0,
\end{equation}
where the infidelity is defined as%
\begin{multline}
\varepsilon(k,\rho_{RA})\equiv\\
1-F((\operatorname{id}_{R}\otimes\mathcal{N}_{A\rightarrow B}^{k})(\rho
_{RA}),(\operatorname{id}_{R}\otimes\mathcal{N}_{A\rightarrow B})(\rho_{RA})),
\end{multline}
$\operatorname{id}_{R}$ denotes the identity map on the auxiliary space, and
$F(\tau,\omega)\equiv\left\Vert \sqrt{\tau}\sqrt{\omega}\right\Vert _{1}^{2}$
is the quantum fidelity \cite{U76}, defined for density operators $\tau$ and
$\omega$. The quantum fidelity obeys a data processing inequality, which is
the statement that%
\begin{equation}
F(\mathcal{M}(\tau),\mathcal{M}(\omega))\geq F(\tau,\omega),
\end{equation}
for states $\tau$ and $\omega$ and a quantum channel $\mathcal{M}$. We can
summarize strong convergence more compactly as the following mathematical
statement:%
\begin{equation}
\sup_{\rho_{RA}}\lim_{k\rightarrow\infty}\varepsilon(k,\rho_{RA})=0.
\label{eq:strong-conv-def-compact}%
\end{equation}
Due to purification, the Schmidt decomposition theorem, and the data
processing inequality for fidelity, we find that for every mixed state
$\rho_{RA}$, there exists a pure state $\psi_{R^{\prime}A}$ with the auxiliary
Hilbert space $\mathcal{H}_{R^{\prime}}$ taken to be isomorphic to
$\mathcal{H}_{A}$, such that%
\begin{equation}
\varepsilon(k,\rho_{RA})\leq\varepsilon(k,\psi_{R^{\prime}A}).
\end{equation}
Thus, when considering strong convergence, it suffices to consider only pure
states $\psi_{R^{\prime}A}$, so that%
\begin{equation}
\sup_{\rho_{RA}}\lim_{k\rightarrow\infty}\varepsilon(k,\rho_{RA})=\sup
_{\psi_{R^{\prime}A}}\lim_{k\rightarrow\infty}\varepsilon(k,\psi_{R^{\prime}%
A}). \label{eq:pure-state-mixed-state}%
\end{equation}

Strong convergence is strictly different from uniform convergence
\cite[Section~3]{SH08}, which amounts to a swap of the supremum and the limit
in \eqref{eq:strong-conv-def-compact}. That is, the channel sequence
$\{\mathcal{N}_{A\rightarrow B}^{k}\}_{k}$ converges \textit{uniformly} to the
channel $\mathcal{N}_{A\rightarrow B}$ if the following holds%
\begin{equation}
\lim_{k\rightarrow\infty}\sup_{\rho_{RA}}\varepsilon(k,\rho_{RA})=0.
\end{equation}
Even though this swap might seem harmless and is of no consequence in finite
dimensions, the issue is important to consider in infinite-dimensional
contexts, especially for bosonic channels. That is, a sequence of channels
could converge in the strong sense, but be as far as possible from converging
in the uniform sense, and an example of this behavior is given in the next subsection.

It was also stressed in \cite{SH08} that the topology of uniform convergence
is too strong for physical applications and should typically not be
considered.
%Pirandola \textit{et al}.'s misinterpretation of the proof of
%\cite[Theorem~24]{WTB16} is effectively due to their suggestion that WTB17
%(Wilde, Tomamichel, Berta from \cite{WTB16}) were employing the topology of
%uniform convergence in analyzing a channel's teleportation simulation, and not
%the topology of strong convergence, an issue that we will return to momentarily.

\subsection{Strong and uniform convergence considerations for
continuous-variable teleportation}

We now turn our attention to convergence in the continuous-variable bosonic
teleportation protocol and focus on \cite[Eq.~(9)]{prl1998braunstein}, which
states that an unideal continuous-variable bosonic teleportation protocol with
input mode$~A$ realizes the following additive-noise quantum Gaussian channel
$\mathcal{T}_{A}^{\bar{\sigma}}$ on an input density operator~$\rho_{A}$:%
\begin{equation}
\rho_{A}\rightarrow\mathcal{T}_{A}^{\bar{\sigma}}(\rho_{A})\equiv\int
d^{2}\alpha\ G_{\bar{\sigma}}(\alpha)\ D(\alpha)\rho_{A}D(-\alpha),
\label{eq:TP-as-add-noise}%
\end{equation}
where $D(\alpha)$ is a displacement operator \cite{S17} and%
\begin{equation}
G_{\bar{\sigma}}(\alpha)\equiv\frac{1}{\pi\bar{\sigma}}\exp\!\left(
-\frac{\left\vert \alpha\right\vert ^{2}}{\bar{\sigma}}\right)
\label{eq:complex-Gauss}%
\end{equation}
is a zero-mean, circularly symmetric complex Gaussian probability density
function with variance $\bar{\sigma}>0$. To be clear, the integral in
\eqref{eq:TP-as-add-noise} is over the whole complex plane~$\alpha
\in\mathbb{C}$. For an explicit proof of \eqref{eq:TP-as-add-noise}, one can
also consult \cite{TBS02,BST02}. The variance parameter $\bar{\sigma}$
quantifies unideal squeezing and unideal detection. Thus, for any $\bar
{\sigma}>0$, the teleportation channel $\mathcal{T}_{A}^{\bar{\sigma}}$ is
unideal and intuitively becomes ideal in the limit $\bar{\sigma}\rightarrow0$.
However, it is this convergence that needs to be made precise. To examine
this, we need a measure of the channel input-output dissimilarity, and the
entanglement infidelity is a good choice, which is essentially the choice made
in \cite{prl1998braunstein} for quantifying the performance of unideal bosonic
teleportation. For a fixed pure state $\psi_{RA}\equiv|\psi\rangle\langle
\psi|_{RA}$ of modes $R$ and $A$, the entanglement infidelity of the channel
$\mathcal{T}_{A}^{\bar{\sigma}}$ with respect to $\psi_{RA}$\ is defined as%
\begin{equation}
\varepsilon(\bar{\sigma},\psi_{RA})\equiv1-\langle\psi|_{RA}(\operatorname{id}%
_{R}\otimes\mathcal{T}_{A}^{\bar{\sigma}})(|\psi\rangle\langle\psi|_{RA}%
)|\psi\rangle_{RA}. \label{eq:ent-infidel}%
\end{equation}
Examining \cite[Eq. (11)]{prl1998braunstein}, we see that the entanglement
infidelity can alternatively be written as%
\begin{equation}
\varepsilon(\bar{\sigma},\psi_{RA})=1-\int d^{2}\alpha\ G_{\bar{\sigma}%
}(\alpha)\left\vert \chi_{\psi_{A}}(\alpha)\right\vert ^{2},
\end{equation}
where $\chi_{\psi_{A}}(\alpha)=\operatorname{Tr}\{D(\alpha)\psi_{A}\}$ is the
Wigner characteristic function of the reduced density operator $\psi_{A}$. By
applying the H\"{o}lder inequality, we conclude that $\chi_{\psi_{A}}(\alpha)$
is bounded for all $\alpha\in\mathbb{C}$ because%
\begin{equation}
\left\vert \operatorname{Tr}\{D(\alpha)\psi_{A}\}\right\vert \leq\left\Vert
D(\alpha)\right\Vert _{\infty}\left\Vert \psi_{A}\right\Vert _{1}=1.
\end{equation}
Exploiting the continuity of $\chi_{\psi_{A}}(\alpha)$ at $\alpha=0$ and the
fact that $\chi_{\psi_{A}}(0)=1$ \cite[Theorem~5.4.1]{Hol11}, as well as
invoking the boundedness of $\chi_{\psi_{A}}(\alpha)$ and \cite[Theorem~9.8]%
{WZ77} regarding the convergence of nascent delta functions, we then conclude
that for a given state $\psi_{RA}$, the following strong convergence holds%
\begin{equation}
\lim_{\bar{\sigma}\rightarrow0}\varepsilon(\bar{\sigma},\psi_{RA})=0,
\label{eq:fixed-state-converge}%
\end{equation}
which can be written, as before, more compactly as%
\begin{equation}
\sup_{\psi_{RA}}\lim_{\bar{\sigma}\rightarrow0}\varepsilon(\bar{\sigma}%
,\psi_{RA})=0. \label{eq:TP-infidelity}%
\end{equation}
Note that, as before and due to \eqref{eq:pure-state-mixed-state},
Eq.~\eqref{eq:TP-infidelity} implies that%
\begin{equation}
\sup_{\rho_{RA}}\lim_{\bar{\sigma}\rightarrow0}\varepsilon(\bar{\sigma}%
,\rho_{RA})=0
\end{equation}
for any mixed state $\rho_{RA}$ where%
\begin{equation}
\varepsilon(\bar{\sigma},\rho_{RA})\equiv1-F(\rho_{RA},(\operatorname{id}%
_{R}\otimes\mathcal{T}_{A}^{\bar{\sigma}})(\rho_{RA}))
\end{equation}
and $\mathcal{H}_{R}$ is an arbitrary auxiliary separable Hilbert space.

One should note here that the convergence in \eqref{eq:TP-infidelity} already
calls into question any claim regarding the necessity of an energy constraint
for the states that are to be teleported using the continuous-variable
teleportation protocol.
%the following statement from Pirandola, Braunstein
%\textit{et al}.~in \cite{P17}:
%\begin{quote}
%\textquotedblleft Unfortunately, an important point which is often forgotten
%and greatly differentiates the Braunstein-Kimble protocol from the standard DV
%teleportation protocol is the necessity of an energy constraint for the input
%alphabet of states.\textquotedblright
%\end{quote}
%\noindent
Clearly, the state $\psi_{RA}$ to be teleported could be chosen as the
following Basel state:%
\begin{equation}
|\beta\rangle_{RA}=\sqrt{\frac{6}{\pi^{2}}}\sum_{n=1}^{\infty}\sqrt{\frac
{1}{n^{2}}}|n\rangle_{R}|n\rangle_{A}, \label{eq:entangled-Basel-state}%
\end{equation}
which has mean photon number equal to $\infty$, but it also
satisfies~\eqref{eq:fixed-state-converge}. Such a state is called a
\textquotedblleft Basel state,\textquotedblright\ due to its normalization
factor being connected with the well known Basel problem, which establishes
that $\sum_{n=1}^{\infty}1/n^{2}=\pi^{2}/6$. For $\hat{n}=\sum_{n=0}^{\infty
}n|n\rangle\langle n|$ the photon-number operator, one can easily check that
the mean photon number $\operatorname{Tr}\{(\hat{n}_{R}+\hat{n}_{A})\beta
_{RA}\}=\infty$, due to the presence of the divergent harmonic series after
$\hat{n}_{A}$ multiplies the reduced density operator $\beta_{A}$. Thus, the
only constraint needed for the convergence in \eqref{eq:TP-infidelity} is that
the state to be teleported be a state (i.e., normalizable). Ref.~\cite{P17} claims that it is necessary for there to be an energy constraint for strong convergence in teleportation; however, the example of the Basel states given above proves that such an energy constraint is not necessary. 

It is also important to note that an exchange of the limit and the supremum in
\eqref{eq:TP-infidelity}\ leads to a drastically different conclusion:%
\begin{equation}
\lim_{\bar{\sigma}\rightarrow0}\sup_{\psi_{RA}}\varepsilon(\bar{\sigma}%
,\psi_{RA})=1. \label{eq:bad-converge}%
\end{equation}
The drastic difference is due to the fact that for all fixed $\bar{\sigma}>0$,
one can find a sequence of states $\{\psi_{RA}^{k}\}_{k}$ such that $\sup
_{k}\varepsilon(\bar{\sigma},\psi_{RA}^{k})=1$, establishing
\eqref{eq:bad-converge}. For example, one could pick each $\psi_{RA}^{k}$ to
be a two-mode squeezed vacuum state with squeezing parameter increasing with
increasing $k$. In the limit of large squeezing, the ideal channel and the
additive-noise channel for any $\bar{\sigma}>0$ become perfectly
distinguishable, having infidelity approaching one, implying
\eqref{eq:bad-converge}. One can directly verify this calculation by employing
the covariance matrix representation of the two-mode squeezed vacuum and the
additive-noise channel, as well as the overlap formula in \cite[Eq.~(4.51)]%
{S17}\ to calculate entanglement fidelity.

I now give details of the aforementioned calculation, regarding how the
continuous-variable bosonic teleportation protocol from
\cite{prl1998braunstein} does not converge uniformly to an ideal channel.
Consider a two-mode squeezed vacuum state $\Phi(N_{S})$ with mean photon
number $N_{S}$ for one of its reduced modes, as defined in \eqref{eq:TMSV}.
%i.e.,
%\begin{equation}
%|\Phi(N_{S})\rangle_{AB}\equiv\frac{1}{\sqrt{N_{S}+1}}\sum_{n=0}^{\infty
%}\left(  \sqrt{\frac{N_{S}}{N_{S}+1}}\right)  ^{n}|n\rangle_{A}|n\rangle_{B}.
%\end{equation}
Such a state has a Wigner-function covariance matrix \cite{S17}\ as follows:%
\begin{multline}%
\begin{bmatrix}
2N_{S}+1 & 2\sqrt{N_{S}(N_{S}+1)}\\
2\sqrt{N_{S}(N_{S}+1)} & 2N_{S}+1
\end{bmatrix}
\\
\oplus%
\begin{bmatrix}
2N_{S}+1 & -2\sqrt{N_{S}(N_{S}+1)}\\
-2\sqrt{N_{S}(N_{S}+1)} & 2N_{S}+1
\end{bmatrix}
.
\end{multline}
After sending one mode of this state through an additive-noise channel with
variance $\bar{\sigma}$ (corresponding to an unideal continuous-variable
bosonic teleportation), the covariance matrix becomes as follows,
corresponding to a state $\tau(N_{S},\bar{\sigma})$:%
\begin{multline}%
\begin{bmatrix}
2N_{S}+1 & 2\sqrt{N_{S}(N_{S}+1)}\\
2\sqrt{N_{S}(N_{S}+1)} & 2N_{S}+1+2 \bar{\sigma}%
\end{bmatrix}
\\
\oplus%
\begin{bmatrix}
2N_{S}+1 & -2\sqrt{N_{S}(N_{S}+1)}\\
-2\sqrt{N_{S}(N_{S}+1)} & 2N_{S}+1+2 \bar{\sigma}%
\end{bmatrix}
.
\end{multline}
The overlap $\operatorname{Tr}\{\omega\sigma\}$ of two zero-mean, two-mode
Gaussian states $\omega$ and $\sigma$ is given by \cite[Eq.~(4.51)]{S17}%
\begin{equation}
\operatorname{Tr}\{\omega\sigma\}=4/\sqrt{\det(V_{\omega}+V_{\sigma})},
\end{equation}
where $V_{\omega}$ and $V_{\sigma}$ are the Wigner-function covariance
matrices of $\omega$ and $\sigma$, respectively. We can then employ this
formula to calculate the overlap%
\begin{equation}
\langle\Phi(N_{S})|\tau(N_{S},\bar{\sigma})|\Phi(N_{S})\rangle
=\operatorname{Tr}\{\Phi(N_{S})\tau(N_{S},\bar{\sigma})\}
\end{equation}
as%
\begin{equation}
\langle\Phi(N_{S})|\tau(N_{S},\bar{\sigma})|\Phi(N_{S})\rangle=\frac{1}%
{\bar{\sigma}+2\bar{\sigma}N_{S}+1}.
\end{equation}
Thus, for a fixed $\bar{\sigma}>0$ and in the limit as $N_{S}\rightarrow
\infty$, we find that $1-\langle\Phi(N_{S})|\tau(N_{S},\bar{\sigma}%
)|\Phi(N_{S})\rangle\rightarrow1$, so that the continuous-variable bosonic
teleportation protocol from \cite{prl1998braunstein}\ does not converge
uniformly to an ideal channel.

To summarize, the kind of convergence considered in \eqref{eq:TP-infidelity}
is the strong sense (topology of strong convergence), whereas the kind of
convergence considered in \eqref{eq:bad-converge}\ is the uniform sense
(topology of uniform convergence) (see, e.g., \cite[Section~3]{SH08}). That
is, \eqref{eq:TP-infidelity} demonstrates that unideal continuous-variable
bosonic teleportation converges \textit{strongly} to an ideal quantum channel
in the limit of ideal squeezing and detection, whereas \eqref{eq:bad-converge}
demonstrates that it does \textit{not} converge \textit{uniformly}.

\subsection{Strong and uniform convergence for tensor-power channels}

A natural consideration to make in the context of quantum Shannon theory is
the convergence of $n$ uses of a channel on a general state of $n$ systems,
where $n$ is a positive integer. To this end, suppose that the strong
convergence in \eqref{eq:strong-conv-def-compact}\ holds for the sequence
$\{\mathcal{N}_{A\rightarrow B}^{k}\}_{k}$ of channels. Then it immediately
follows that the sequence $\{(\mathcal{N}_{A\rightarrow B}^{k})^{\otimes
n}\}_{k}$ converges strongly to $\mathcal{N}_{A\rightarrow B}^{\otimes n}$.
Indeed, for an arbitrary density operator $\rho_{RA^{n}}$ acting on
$\mathcal{H}_{R}\otimes\mathcal{H}_{A}^{\otimes n}$, we are now interested in
bounding the infidelity for the tensor-power channels $(\mathcal{N}%
_{A\rightarrow B}^{k})^{\otimes n}$ and $\mathcal{N}_{A\rightarrow B}^{\otimes
n}$:%
\begin{multline}
\varepsilon^{(n)}(k,\rho_{RA^{n}})\equiv\\
1-F((\operatorname{id}_{R}\otimes(\mathcal{N}_{A\rightarrow B}^{k})^{\otimes
n})(\rho_{RA^{n}}),(\operatorname{id}_{R}\otimes\mathcal{N}_{A\rightarrow
B}^{\otimes n})(\rho_{RA^{n}})),
\end{multline}
in the limit as $k\rightarrow\infty$. By employing the fact that
\begin{equation}
P(\tau,\omega)\equiv\sqrt{1-F(\tau,\omega)}%
\end{equation}
obeys the triangle inequality \cite{R02,R03,GLN04,R06}, we conclude that, for
an arbitrary density operator $\rho_{RA^{n}}$, the following inequality holds
\begin{widetext}
\begin{align}
& P\left[  (\operatorname{id}_{R}\otimes(\mathcal{N}_{A\rightarrow B}%
^{k})^{\otimes n})(\rho_{RA^{n}}),(\operatorname{id}_{R}\otimes\mathcal{N}%
_{A\rightarrow B}^{\otimes n})(\rho_{RA^{n}})\right]
\nonumber\\
& \leq\sum_{i=1}^{n}P\left[  (\operatorname{id}_{R}\otimes(\mathcal{N}%
_{A\rightarrow B}^{k})^{\otimes i}\otimes\mathcal{N}_{A\rightarrow B}^{\otimes
n-i})(\rho_{RA^{n}}),(\operatorname{id}_{R}\otimes(\mathcal{N}_{A\rightarrow
B}^{k})^{\otimes i-1}\otimes\mathcal{N}_{A\rightarrow B}^{\otimes
n-i+1})(\rho_{RA^{n}})\right]
\label{eq:tensor-power-bnd-1}\\
& \leq\sum_{i=1}^{n}P\left[  (\operatorname{id}_{R}\otimes\operatorname{id}%
_{A}^{\otimes  i-1  }\otimes\mathcal{N}_{A\rightarrow B}%
^{k}\otimes\mathcal{N}_{A\rightarrow B}^{\otimes n-i})(\rho_{RA^{n}%
}),(\operatorname{id}_{R}\otimes\operatorname{id}_{A}^{\otimes
i-1  }\otimes\mathcal{N}_{A\rightarrow B}\otimes\mathcal{N}%
_{A\rightarrow B}^{\otimes n-i})(\rho_{RA^{n}})\right]
.\label{eq:tensor-power-bnd-3}%
\end{align}
\end{widetext}The first inequality follows from the triangle inequality for
$P(\tau,\omega)$, and the second follows from data processing for the fidelity
under the channel%
\begin{equation}
\operatorname{id}_{R}\otimes(\mathcal{N}_{A\rightarrow B}^{k})^{\otimes
i-1}\otimes\operatorname{id}_{A}^{\otimes n-i+1}%
\end{equation}
acting on the states%
\begin{align}
&  (\operatorname{id}_{R}\otimes\operatorname{id}_{A}^{\otimes i-1}%
\otimes\mathcal{N}_{A\rightarrow B}^{k}\otimes\mathcal{N}_{A\rightarrow
B}^{\otimes n-i})(\rho_{RA^{n}}),\\
&  (\operatorname{id}_{R}\otimes\operatorname{id}_{A}^{\otimes i-1}%
\otimes\mathcal{N}_{A\rightarrow B}\otimes\mathcal{N}_{A\rightarrow
B}^{\otimes n-i})(\rho_{RA^{n}}).
\end{align}
The method used in
\eqref{eq:tensor-power-bnd-1}--\eqref{eq:tensor-power-bnd-3} is related to the
telescoping approach of \cite{LS09}, employed in the context of continuity of
quantum channel capacities (see the proof of \cite[Theorem~11]{LS09} in
particular). Now employing strong convergence of the channel sequence
$\{\mathcal{N}_{A\rightarrow B}^{k}\}_{k}$ and the fact that $\mathcal{N}%
_{A\rightarrow B}^{\otimes n-i}(\rho_{RA^{n}})$ is a \textit{fixed state}
independent of $k$ for each $i\in\{1,\ldots,n\}$, we conclude that for all
density operators $\rho_{RA^{n}}$%
\begin{equation}
\lim_{k\rightarrow\infty}\varepsilon^{(n)}(k,\rho_{RA^{n}})=0.
\end{equation}
This result can be summarized more compactly as%
\begin{multline}
\sup_{\rho_{RA}}\lim_{k\rightarrow\infty}\varepsilon(k,\rho_{RA}%
)=0\label{eq:str-conv-tensor-power}\\
\Longrightarrow\sup_{\rho_{RA^{n}}}\lim_{k\rightarrow\infty}\varepsilon
^{(n)}(k,\rho_{RA^{n}})=0.
\end{multline}
That is, the strong convergence of the channel sequence $\{\mathcal{N}%
_{A\rightarrow B}^{k}\}_{k}$ implies the strong convergence of the
tensor-power channel sequence $\{(\mathcal{N}_{A\rightarrow B}^{k})^{\otimes
n}\}_{k}$ for any finite $n$.

For the case of continuous-variable teleportation, we have that $n$ unideal
teleportations with the same performance corresponds to the tensor-power
channel $(\mathcal{T}_{A}^{\bar{\sigma}})^{\otimes n}$. By appealing to
\eqref{eq:TP-infidelity} and \eqref{eq:str-conv-tensor-power},\ or
alternatively employing Wigner characteristic functions, \cite[Theorem~5.4.1]%
{Hol11}, and \cite[Theorem~9.8]{WZ77}, the following convergence holds: for a
pure state $\psi_{RA^{n}}$, we have that%
\begin{equation}
\lim_{\bar{\sigma}\rightarrow0}\varepsilon^{(n)}(\bar{\sigma},\psi_{RA^{n}%
})=0, \label{eq:n-converge-tele}%
\end{equation}
where%
\begin{multline}
\varepsilon^{(n)}(\bar{\sigma},\psi_{RA^{n}})\equiv\\
1-\langle\psi|_{RA^{n}}(\operatorname{id}_{R}\otimes(\mathcal{T}_{A}%
^{\bar{\sigma}})^{\otimes n})(|\psi\rangle\langle\psi|_{RA^{n}})|\psi
\rangle_{RA^{n}}.
\end{multline}
Again, more compactly, this is the same as%
\begin{equation}
\sup_{\psi_{RA^{n}}}\lim_{\bar{\sigma}\rightarrow0}\varepsilon^{(n)}%
(\bar{\sigma},\psi_{RA^{n}})=0.
\end{equation}
and drastically different from
\begin{equation}
\lim_{\bar{\sigma}\rightarrow0}\sup_{\psi_{RA^{n}}}\varepsilon^{(n)}%
(\bar{\sigma},\psi_{RA^{n}})=1.
\end{equation}

I end this subsection by noting the following proposition, having to do with
the strong convergence of parallel compositions of strongly converging channel
sequences. In fact, the parallel composition result in
\eqref{eq:str-conv-tensor-power} could be proven by using only the following
proposition and iterating.

\begin{proposition}
\label{prop:parallel-comp} Let $\{\mathcal{N}_{A_{1}\rightarrow B_{1}}%
^{k}\}_{k}$ be a channel sequence that converges strongly to a channel
$\mathcal{N}_{A_{1}\rightarrow B_{1}}$, and let $\{\mathcal{M}_{A_{2}%
\rightarrow B_{2}}^{k}\}_{k}$ be a channel sequence that converges strongly to
a channel $\mathcal{M}_{A_{2}\rightarrow B_{2}}$. Then the channel sequence
$\{\mathcal{N}_{A_{1}\rightarrow B_{1}}^{k}\otimes\mathcal{M}_{A_{2}%
\rightarrow B_{2}}^{k}\}_{k}$ converges strongly to $\mathcal{N}%
_{A_{1}\rightarrow B_{1}}\otimes\mathcal{M}_{A_{2}\rightarrow B_{2}}$.
\end{proposition}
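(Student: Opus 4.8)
The plan is to adapt the telescoping/triangle-inequality argument used just above for tensor powers, now applied to a single ``cut'' between the two tensor factors. Fix an arbitrary auxiliary separable Hilbert space $\mathcal{H}_{R}$, an arbitrary density operator $\rho_{RA_{1}A_{2}}$, and let $\varepsilon(k,\rho_{RA_{1}A_{2}})$ denote the infidelity between $(\operatorname{id}_{R}\otimes\mathcal{N}_{A_{1}\rightarrow B_{1}}^{k}\otimes\mathcal{M}_{A_{2}\rightarrow B_{2}}^{k})(\rho_{RA_{1}A_{2}})$ and $(\operatorname{id}_{R}\otimes\mathcal{N}_{A_{1}\rightarrow B_{1}}\otimes\mathcal{M}_{A_{2}\rightarrow B_{2}})(\rho_{RA_{1}A_{2}})$. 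Inserting the intermediate state $(\operatorname{id}_{R}\otimes\mathcal{N}_{A_{1}\rightarrow B_{1}}^{k}\otimes\mathcal{M}_{A_{2}\rightarrow B_{2}})(\rho_{RA_{1}A_{2}})$ and using that $P(\tau,\omega)=\sqrt{1-F(\tau,\omega)}$ satisfies the triangle inequality, I would bound $\sqrt{\varepsilon(k,\rho_{RA_{1}A_{2}})}$ by a sum $a_{k}+b_{k}$, where $a_{k}$ is the $P$-distance between the outputs of $\mathcal{N}^{k}\otimes\mathcal{M}^{k}$ and $\mathcal{N}^{k}\otimes\mathcal{M}$ on $\rho_{RA_{1}A_{2}}$, and $b_{k}$ is the $P$-distance between the outputs of $\mathcal{N}^{k}\otimes\mathcal{M}$ and $\mathcal{N}\otimes\mathcal{M}$ on $\rho_{RA_{1}A_{2}}$.

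For $a_{k}$, both output states arise by applying the common channel $\operatorname{id}_{RB_{2}}\otimes\mathcal{N}_{A_{1}\rightarrow B_{1}}^{k}$ to $(\operatorname{id}_{RA_{1}}\otimes\mathcal{M}_{A_{2}\rightarrow B_{2}}^{k})(\rho_{RA_{1}A_{2}})$ and to $(\operatorname{id}_{RA_{1}}\otimes\mathcal{M}_{A_{2}\rightarrow B_{2}})(\rho_{RA_{1}A_{2}})$, respectively; by data processing for the fidelity (equivalently, contractivity of $P$), $a_{k}$ is therefore at most the root-infidelity $\sqrt{1-F((\operatorname{id}_{RA_{1}}\otimes\mathcal{M}_{A_{2}\rightarrow B_{2}}^{k})(\rho_{RA_{1}A_{2}}),(\operatorname{id}_{RA_{1}}\otimes\mathcal{M}_{A_{2}\rightarrow B_{2}})(\rho_{RA_{1}A_{2}}))}$. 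This is exactly a root-infidelity for the sequence $\{\mathcal{M}_{A_{2}\rightarrow B_{2}}^{k}\}_{k}$ on the state $\rho_{RA_{1}A_{2}}$ with the auxiliary system taken to be $RA_{1}$, so the assumed strong convergence of $\{\mathcal{M}_{A_{2}\rightarrow B_{2}}^{k}\}_{k}$ gives $a_{k}\rightarrow0$ as $k\rightarrow\infty$. Symmetrically, for $b_{k}$ the channel $\operatorname{id}_{RB_{1}}\otimes\mathcal{M}_{A_{2}\rightarrow B_{2}}$ is common to both output states, so data processing bounds $b_{k}$ by a root-infidelity for $\{\mathcal{N}_{A_{1}\rightarrow B_{1}}^{k}\}_{k}$ on $\rho_{RA_{1}A_{2}}$ with auxiliary system $RA_{2}$, and strong convergence of $\{\mathcal{N}_{A_{1}\rightarrow B_{1}}^{k}\}_{k}$ gives $b_{k}\rightarrow0$.

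Combining these, $\varepsilon(k,\rho_{RA_{1}A_{2}})\leq(a_{k}+b_{k})^{2}\rightarrow0$ for every fixed $\rho_{RA_{1}A_{2}}$, and since $\rho_{RA_{1}A_{2}}$ was arbitrary this is precisely the statement $\sup_{\rho_{RA_{1}A_{2}}}\lim_{k\rightarrow\infty}\varepsilon(k,\rho_{RA_{1}A_{2}})=0$, i.e., strong convergence of $\{\mathcal{N}_{A_{1}\rightarrow B_{1}}^{k}\otimes\mathcal{M}_{A_{2}\rightarrow B_{2}}^{k}\}_{k}$ to $\mathcal{N}_{A_{1}\rightarrow B_{1}}\otimes\mathcal{M}_{A_{2}\rightarrow B_{2}}$. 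If desired, one can restrict to pure $\rho_{RA_{1}A_{2}}$ from the outset using \eqref{eq:pure-state-mixed-state}, though this is not needed. The only point requiring a little care — and the closest thing here to an ``obstacle'' — is that in the two data-processing steps the role of the reference space changes from $\mathcal{H}_{R}$ to $\mathcal{H}_{R}\otimes\mathcal{H}_{A_{1}}$ and to $\mathcal{H}_{R}\otimes\mathcal{H}_{A_{2}}$, respectively; this is legitimate precisely because the definition of strong convergence in \eqref{eq:strong-conv-def-compact} quantifies over an \emph{arbitrary} auxiliary separable Hilbert space, so no analytic difficulty arises and the argument is essentially bookkeeping of which systems play the role of reference. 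Finally, iterating this proposition $n-1$ times recovers the tensor-power result in \eqref{eq:str-conv-tensor-power}, as noted just before the proposition.
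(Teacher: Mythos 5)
Your proposal is correct and follows essentially the same route as the paper's own proof: insert the intermediate state $(\operatorname{id}_{R}\otimes\mathcal{N}_{A_{1}\rightarrow B_{1}}^{k}\otimes\mathcal{M}_{A_{2}\rightarrow B_{2}})(\rho_{RA_{1}A_{2}})$, apply the triangle inequality for $P$, and use data processing to reduce each term to a single-channel root-infidelity with an enlarged reference system. The observation that the enlarged reference is admissible because strong convergence quantifies over arbitrary auxiliary separable Hilbert spaces is exactly the (implicit) justification in the paper as well.
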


\begin{proof}
A proof is similar to what is given above, and I give it for completeness. Let
$\rho_{RA_{1}A_{2}}$ be an arbitrary state. Consider that\begin{widetext}%
\begin{align}
& \!\!\!\!\!\!\!\! P((\operatorname{id}_{R}\otimes\mathcal{N}_{A_{1}\rightarrow B_{1}}%
^{k}\otimes\mathcal{M}_{A_{2}\rightarrow B_{2}}^{k})(\rho_{RA_{1}A_{2}%
}),(\operatorname{id}_{R}\otimes\mathcal{N}_{A_{1}\rightarrow B_{1}}%
\otimes\mathcal{M}_{A_{2}\rightarrow B_{2}})(\rho_{RA_{1}A_{2}}))\nonumber\\
&  \leq P((\operatorname{id}_{R}\otimes\mathcal{N}_{A_{1}\rightarrow B_{1}%
}^{k}\otimes\mathcal{M}_{A_{2}\rightarrow B_{2}}^{k})(\rho_{RA_{1}A_{2}%
}),(\operatorname{id}_{R}\otimes\mathcal{N}_{A_{1}\rightarrow B_{1}}%
^{k}\otimes\mathcal{M}_{A_{2}\rightarrow B_{2}})(\rho_{RA_{1}A_{2}%
}))\nonumber\\
&  \qquad+P((\operatorname{id}_{R}\otimes\mathcal{N}_{A_{1}\rightarrow B_{1}%
}^{k}\otimes\mathcal{M}_{A_{2}\rightarrow B_{2}})(\rho_{RA_{1}A_{2}%
}),(\operatorname{id}_{R}\otimes\mathcal{N}_{A_{1}\rightarrow B_{1}}%
\otimes\mathcal{M}_{A_{2}\rightarrow B_{2}})(\rho_{RA_{1}A_{2}}))\nonumber\\
&  \leq P((\operatorname{id}_{RA_{1}}\otimes\mathcal{M}_{A_{2}\rightarrow
B_{2}}^{k})(\rho_{RA_{1}A_{2}}),\mathcal{M}_{A_{2}\rightarrow B_{2}}%
(\rho_{RA_{1}A_{2}}))\nonumber\\
&  \qquad+P((\operatorname{id}_{R}\otimes\mathcal{N}_{A_{1}\rightarrow B_{1}%
}^{k}\otimes\operatorname{id}_{A_{2}})(\rho_{RA_{1}A_{2}}),(\operatorname{id}%
_{R}\otimes\mathcal{N}_{A_{1}\rightarrow B_{1}}\otimes\operatorname{id}%
_{A_{2}})(\rho_{RA_{1}A_{2}})).\label{eq:parallel-str-conv}%
\end{align}
The first inequality follows from the triangle inequality and the second from
data processing. Using the strong convergence of $\{\mathcal{N}_{A_{1}%
\rightarrow B_{1}}^{k}\}_{k}$ and $\{\mathcal{M}_{A_{2}\rightarrow B_{2}}%
^{k}\}_{k}$, applying the inequality in \eqref{eq:parallel-str-conv}, and
taking the limit $k\rightarrow\infty$, we find that%
\begin{equation}
\lim_{k\rightarrow\infty}P((\mathcal{N}_{A_{1}\rightarrow B_{1}}^{k}%
\otimes\mathcal{M}_{A_{2}\rightarrow B_{2}}^{k})(\rho_{RA_{1}A_{2}%
}),(\mathcal{N}_{A_{1}\rightarrow B_{1}}\otimes\mathcal{M}_{A_{2}\rightarrow
B_{2}})(\rho_{RA_{1}A_{2}}))=0.
\end{equation}
\end{widetext}Since the state $\rho_{RA_{1}A_{2}}$ was arbitrary, the proof is complete.
\end{proof}

\subsection{Strong and uniform convergence in arbitrary contexts}

\label{sec:strong-arbitrary}The most general way to distinguish $n$ uses of
two different quantum channels is by means of an adaptive protocol. Such
adaptive channel discrimination protocols have been considered extensively in
the literature in the context of finite-dimensional quantum channel
discrimination (see, e.g., \cite{WY06,CDP08,DFY09,HHLW10,CMW14,TW16}).
However, to the best of my knowledge, the issues of strong and uniform
convergence have not yet been considered explicitly in the literature in the
context of infinite-dimensional channel discrimination using adaptive
strategies. The purpose of this section is to clarify these issues by defining
strong and uniform convergence in this general context and then to prove
explicitly that strong convergence of a channel sequence $\{\mathcal{N}%
_{A\rightarrow B}^{k}\}_{k}$ to $\mathcal{N}_{A\rightarrow B}$ implies strong
convergence of $n$ uses of each channel in $\{\mathcal{N}_{A\rightarrow B}%
^{k}\}_{k}$ to $n$ uses of $\mathcal{N}_{A\rightarrow B}$ in the rather
general sense described below.

To clarify what is meant by an adaptive protocol for channel discrimination,
suppose that the task is to distinguish $n$ uses of the channel $\mathcal{N}%
_{A\rightarrow B}^{k}$ from $n$ uses of the channel $\mathcal{N}_{A\rightarrow
B}$. The most general protocol for doing so begins with the preparation of a
state $\rho_{R_{1}A_{1}}$, where system $A_{1}$ is isomorphic to the channel
input system $A$ and $R_{1}$ corresponds to an arbitrary auxiliary separable
Hilbert space. The system $A_{1}$ is then fed in to the first channel use of
$\mathcal{N}_{A\rightarrow B}^{k}$ or $\mathcal{N}_{A\rightarrow B}$,
depending on which of these channels is chosen from the start. The resulting
state is then either%
\begin{equation}
\mathcal{N}_{A\rightarrow B}^{k}(\rho_{R_{1}A_{1}}) \quad\text{ or }
\quad\mathcal{N}_{A\rightarrow B}(\rho_{R_{1}A_{1}}),
\end{equation}
depending on which channel is selected, and where I have omitted the identity
map on $R_{1}$ for simplicity. After this, the discriminator applies a quantum
channel $\mathcal{A}_{R_{1}B_{1}\rightarrow R_{2}A_{2}}^{(1)}$, where $R_{2}$
corresponds to another arbitrary separable Hilbert space, which need not be
isomorphic to $R_{1}$, and $A_{2}$ corresponds to a separable Hilbert space
isomorphic to the channel input $A$. The discriminator then calls the second
use of $\mathcal{N}_{A\rightarrow B}^{k}$ or $\mathcal{N}_{A\rightarrow B}$,
such that the state is now either%
\begin{equation}
(\mathcal{N}_{A_{2}\rightarrow B_{2}}^{k}\circ\mathcal{A}_{R_{1}%
B_{1}\rightarrow R_{2}A_{2}}^{(1)}\circ\mathcal{N}_{A_{1}\rightarrow B_{1}%
}^{k})(\rho_{R_{1}A_{1}}),
\end{equation}
or%
\begin{equation}
(\mathcal{N}_{A_{2}\rightarrow B_{2}}\circ\mathcal{A}_{R_{1}B_{1}\rightarrow
R_{2}A_{2}}^{(1)}\circ\mathcal{N}_{A_{1}\rightarrow B_{1}})(\rho_{R_{1}A_{1}%
}).
\end{equation}
This process continues for $n$ channel uses, and then the final state is
either%
\begin{multline}
\omega_{R_{n}B_{n}}^{k}\equiv\label{eq:omega-k}\\
\left(  \mathcal{N}_{A_{n}\rightarrow B_{n}}^{k}\circ\left[  \bigcirc
_{j=1}^{n-1}\mathcal{A}_{R_{j}B_{j}\rightarrow R_{j+1}A_{j+1}}^{(j)}%
\circ\mathcal{N}_{A_{j}\rightarrow B_{j}}^{k}\right]  \right)  (\rho
_{R_{1}A_{1}}),
\end{multline}
or%
\begin{multline}
\omega_{R_{n}B_{n}}\equiv\label{eq:omega}\\
\left(  \mathcal{N}_{A_{n}\rightarrow B_{n}}\circ\left[  \bigcirc_{j=1}%
^{n-1}\mathcal{A}_{R_{j}B_{j}\rightarrow R_{j+1}A_{j+1}}^{(j)}\circ
\mathcal{N}_{A_{j}\rightarrow B_{j}}\right]  \right)  (\rho_{R_{1}A_{1}}).
\end{multline}
Let $\mathcal{P}^{(n)}$ denote the full protocol, which consists of the state
preparation $\rho_{R_{1}A_{1}}$ and the $n-1$ channels $\{\mathcal{A}%
_{R_{j}B_{j}\rightarrow R_{j+1}A_{j+1}}^{(j)}\}_{j=1}^{n-1}$. The infidelity
in this case, for the fixed protocol $\mathcal{P}^{(n)}$, is then equal to%
\begin{equation}
\varepsilon_{\text{ad}}^{(n)}(k,\mathcal{P}^{(n)})\equiv1-F(\omega_{R_{n}%
B_{n}}^{k},\omega_{R_{n}B_{n}}). \label{eq:adap-err}%
\end{equation}
Figure~\ref{fig:adapt}\ depicts these channels and states, which are used in a
general adaptive strategy to discriminate three uses of $\mathcal{N}%
_{A\rightarrow B}^{k}$ from three uses of $\mathcal{N}_{A\rightarrow B}$.

\begin{figure*}[ptb]
\begin{center}
\includegraphics[
width=4.3927in
]{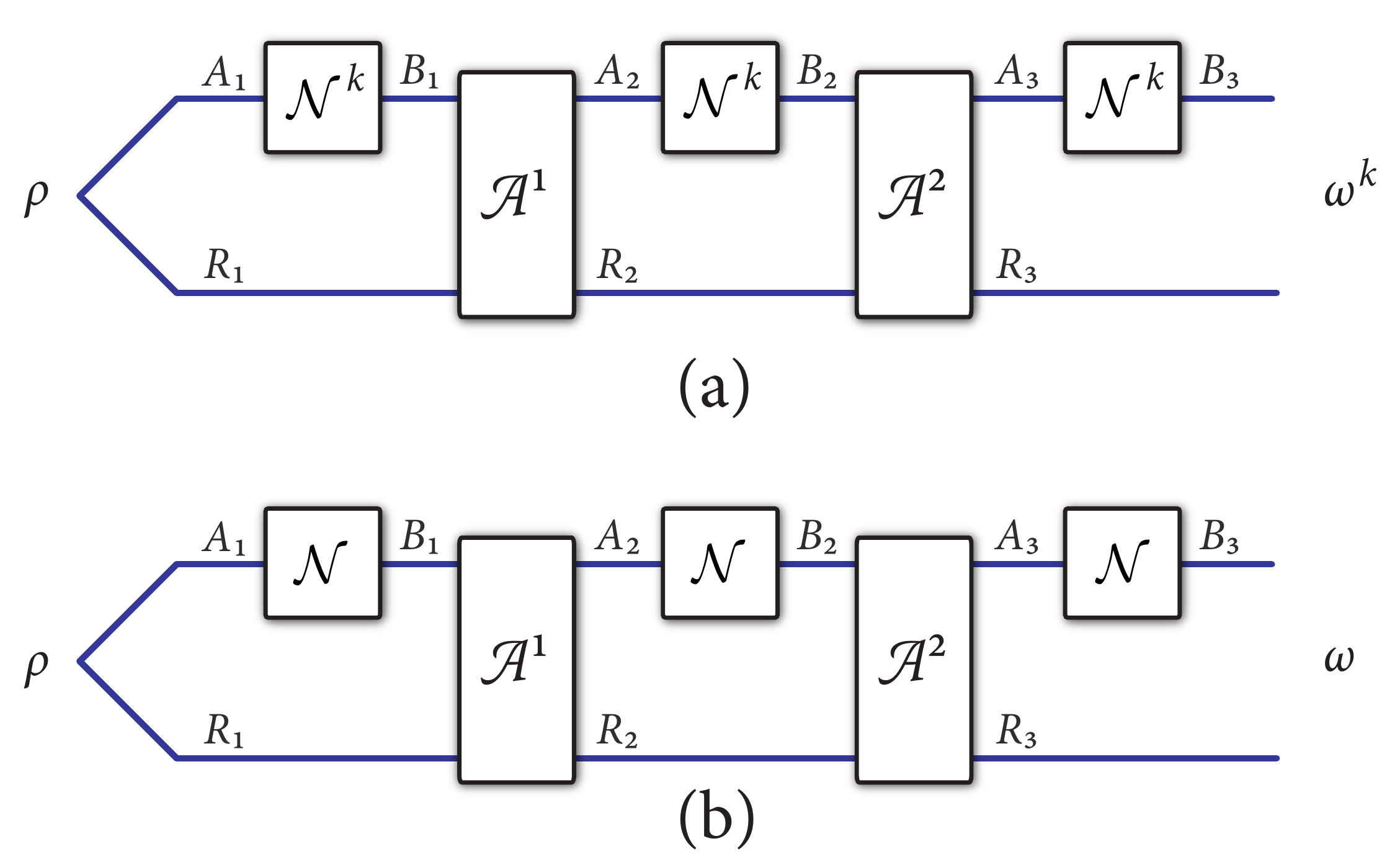}
\end{center}
\caption{Adaptive protocol for distinguishing three uses of the channel
$\mathcal{N}_{A\rightarrow B}^{k}$ from three uses of $\mathcal{N}%
_{A\rightarrow B}$. The protocol is denoted by $\mathcal{P}^{(3)}$ and
consists of state preparation $\rho_{R_{1}A_{1}}$, as well as the channels
$\mathcal{A}_{R_{1}B_{1}\rightarrow R_{2}A_{2}}^{(1)}$ and $\mathcal{A}%
_{R_{2}B_{2}\rightarrow R_{3}A_{3}}^{(2)}$. (a) The protocol $\mathcal{P}%
^{(3)}$ is used with three uses of the channel $\mathcal{N}_{A\rightarrow
B}^{k}$, and the final state is $\omega_{R_{3}B_{3}}^{k}$. (b) The protocol
$\mathcal{P}^{(3)}$ is used with three uses of the channel $\mathcal{N}%
_{A\rightarrow B}$, and the final state is $\omega_{R_{3}B_{3}}$. Strong
convergence of the channel sequence $\{\mathcal{N}_{A\rightarrow B}^{k}\}_{k}$
to $\mathcal{N}_{A\rightarrow B}$ implies that, given a fixed protocol
$\mathcal{P}^{(3)}$, the infidelity of the states $\omega_{R_{3}B_{3}}^{k}$
and $\omega_{R_{3}B_{3}}$ converges to zero in the limit as $k\rightarrow
\infty$.}%
\label{fig:adapt}%
\end{figure*}

In this general context, strong convergence corresponds to the following
statement: for a given protocol $\mathcal{P}^{(n)}$, the following limit holds%
\begin{equation}
\lim_{k\rightarrow\infty}\varepsilon_{\text{ad}}^{(n)}(k,\mathcal{P}^{(n)})=0,
\end{equation}
or more compactly,%
\begin{equation}
\sup_{\mathcal{P}^{(n)}}\lim_{k\rightarrow\infty}\varepsilon_{\text{ad}}%
^{(n)}(k,\mathcal{P}^{(n)})=0.
\end{equation}
Uniform convergence again corresponds to a swap of the supremum and limit%
\begin{equation}
\lim_{k\rightarrow\infty}\sup_{\mathcal{P}^{(n)}}\varepsilon_{\text{ad}}%
^{(n)}(k,\mathcal{P}^{(n)})=0,
\end{equation}
and again, it should typically be avoided in physical applications as it is
too strong and not needed for most purposes, following the suggestions of
\cite[Section~3]{SH08}.

I now explicitly show that strong convergence of the sequence $\{\mathcal{N}%
_{A\rightarrow B}^{k}\}_{k}$ implies strong convergence of $n$ uses of each
channel in this sequence in this general sense. The proof is elementary and
similar to that in
\eqref{eq:tensor-power-bnd-1}--\eqref{eq:tensor-power-bnd-3}, making use of
the triangle inequality and data processing of fidelity. It bears similarities
to \textit{numerous} prior results in the literature
\cite{BBBV97,BBHT98,Z99,BHLS03,CMW14,TGW14,TGW14Nat,Christandl2017}, in which
adaptive protocols were analyzed. For simplicity, we can focus on the case of
$n=3$ and then the proof is easily extended. Begin by considering a
\textit{fixed protocol} $\mathcal{P}^{(3)}$. Then consider
that\begin{widetext}%
\begin{align}
\sqrt{\varepsilon_{\text{ad}}^{(3)}(k,\mathcal{P}^{(3)})} &  =P\left[
\omega_{R_{3}B_{3}}^{k},\omega_{R_{3}B_{3}}\right]  \nonumber\\
&  =P\left[  \left(  \mathcal{N}^{k}\circ\mathcal{A}^{(2)}\circ\mathcal{N}%
^{k}\circ\mathcal{A}^{(1)}\circ\mathcal{N}^{k}\right)  (\rho_{R_{1}A_{1}%
}),\left(  \mathcal{N}\circ\mathcal{A}^{(2)}\circ\mathcal{N}\circ
\mathcal{A}^{(1)}\circ\mathcal{N}\right)  (\rho_{R_{1}A_{1}})\right]
\nonumber\\
&  \leq P\left[  \left(  \mathcal{N}^{k}\circ\mathcal{A}^{(2)}\circ
\mathcal{N}^{k}\circ\mathcal{A}^{(1)}\circ\mathcal{N}^{k}\right)  (\rho
_{R_{1}A_{1}}),\left(  \mathcal{N}^{k}\circ\mathcal{A}^{(2)}\circ
\mathcal{N}^{k}\circ\mathcal{A}^{(1)}\circ\mathcal{N}\right)  (\rho
_{R_{1}A_{1}})\right]  \nonumber\\
&  \qquad+P\left[  \left(  \mathcal{N}^{k}\circ\mathcal{A}^{(2)}%
\circ\mathcal{N}^{k}\circ\mathcal{A}^{(1)}\circ\mathcal{N}\right)
(\rho_{R_{1}A_{1}}),\left(  \mathcal{N}^{k}\circ\mathcal{A}^{(2)}%
\circ\mathcal{N}\circ\mathcal{A}^{(1)}\circ\mathcal{N}\right)  (\rho
_{R_{1}A_{1}})\right]  \nonumber\\
&  \qquad+P\left[  \left(  \mathcal{N}^{k}\circ\mathcal{A}^{(2)}%
\circ\mathcal{N}\circ\mathcal{A}^{(1)}\circ\mathcal{N}\right)  (\rho
_{R_{1}A_{1}}),\left(  \mathcal{N}\circ\mathcal{A}^{(2)}\circ\mathcal{N}%
\circ\mathcal{A}^{(1)}\circ\mathcal{N}\right)  (\rho_{R_{1}A_{1}})\right]
\nonumber\\
&  \leq P\left[  \mathcal{N}^{k}(\rho_{R_{1}A_{1}}),\mathcal{N}(\rho
_{R_{1}A_{1}})\right]  \nonumber\\
&  \qquad+P\left[  \mathcal{N}^{k}\left[  (\mathcal{A}^{(1)}\circ
\mathcal{N})(\rho_{R_{1}A_{1}})\right]  ,\mathcal{N}\left[  (\mathcal{A}%
^{(1)}\circ\mathcal{N})(\rho_{R_{1}A_{1}})\right]  \right]  \nonumber\\
&  \qquad+P\left[  \mathcal{N}^{k}\left[  (\mathcal{A}^{(2)}\circ
\mathcal{N}\circ\mathcal{A}^{(1)}\circ\mathcal{N})(\rho_{R_{1}A_{1}})\right]
,\mathcal{N}\left[  (\mathcal{A}^{(2)}\circ\mathcal{N}\circ\mathcal{A}%
^{(1)}\circ\mathcal{N})(\rho_{R_{1}A_{1}})\right]  \right]
,\label{eq:adap-err-dist}%
\end{align}
\end{widetext}where I have omitted some system labels for simplicity. The
first inequality follows from the triangle inequality and the second from data
processing of the fidelity under the channels%
\begin{align}
&  \mathcal{N}^{k}\circ\mathcal{A}^{(2)}\circ\mathcal{N}^{k}\circ
\mathcal{A}^{(1)},\text{ and}\\
&  \mathcal{N}^{k}\circ\mathcal{A}^{(2)}.
\end{align}
The inequality in \eqref{eq:adap-err-dist}\ can be understood as saying that
the overall distinguishability of the $n$ uses of $\mathcal{N}^{k}$ and
$\mathcal{N}$, as captured by $P\left[  \omega_{R_{3}B_{3}}^{k},\omega
_{R_{3}B_{3}}\right]  $, is limited by the sum of the distinguishabilities at
every step in the discrimination protocol (this is similar to the observations
made in \cite{BBBV97,BBHT98,Z99,BHLS03,CMW14,TGW14,TGW14Nat,Christandl2017}).
Now employing the inequality in \eqref{eq:adap-err-dist}, the facts that%
\begin{align}
&  \rho_{R_{1}A_{1}},\\
&  (\mathcal{A}^{(1)}\circ\mathcal{N})(\rho_{R_{1}A_{1}}),\text{ and}\\
&  (\mathcal{A}^{(2)}\circ\mathcal{N}\circ\mathcal{A}^{(1)}\circ
\mathcal{N})(\rho_{R_{1}A_{1}})
\end{align}
are \textit{fixed states} independent of $k$, the strong convergence of
$\{\mathcal{N}_{A\rightarrow B}^{k}\}_{k}$, and taking the limit
$k\rightarrow\infty$ on both sides of the inequality in
\eqref{eq:adap-err-dist}, we conclude that for any fixed protocol
$\mathcal{P}^{(3)}$, the following limit holds%
\begin{equation}
\lim_{k\rightarrow\infty}\varepsilon_{\text{ad}}^{(3)}(k,\mathcal{P}^{(3)})=0.
\end{equation}
By the same reasoning with the triangle inequality and data processing, the
argument extends to any finite positive integer $n$, so that for any fixed
protocol $\mathcal{P}^{(n)}$, the following limit holds%
\begin{equation}
\lim_{k\rightarrow\infty}\varepsilon_{\text{ad}}^{(n)}(k,\mathcal{P}^{(n)})=0.
\end{equation}
We can summarize the above development more compactly as%
\begin{multline}
\sup_{\rho_{RA}}\lim_{k\rightarrow\infty}\varepsilon(k,\rho_{RA}%
)=0\label{eq:strong-conv-arbitrary}\\
\Longrightarrow\sup_{\mathcal{P}^{(n)}}\lim_{k\rightarrow\infty}%
\varepsilon_{\text{ad}}^{(n)}(k,\mathcal{P}^{(n)})=0.
\end{multline}
That is, strong convergence of the channel sequence $\{\mathcal{N}%
_{A\rightarrow B}^{k}\}_{k}$ implies strong convergence of $n$ uses of each
channel $\mathcal{N}_{A\rightarrow B}^{k}$ in this sequence in any context in
which the $n$ uses of $\mathcal{N}_{A\rightarrow B}^{k}$ could be invoked.

I end this subsection by noting the following proposition, having to do with
the strong convergence of serial compositions of channel sequences. In fact,
the serial composition result in \eqref{eq:strong-conv-arbitrary} for adaptive
protocols could be proven by employing only the following proposition and iterating.

\begin{proposition}
\label{prop:sequential-comp} Let $\{\mathcal{N}_{A\rightarrow B}^{k}\}_{k}$ be
a channel sequence that converges strongly to a channel $\mathcal{N}%
_{A\rightarrow B}$, and let $\{\mathcal{M}_{B\rightarrow C}^{k}\}_{k}$ be a
channel sequence that converges strongly to a channel $\mathcal{M}%
_{B\rightarrow C}$. Then the channel sequence $\{\mathcal{M}_{B\rightarrow
C}^{k}\circ\mathcal{N}_{A\rightarrow B}^{k}\}_{k}$ converges strongly to
$\mathcal{M}_{B\rightarrow C}\circ\mathcal{N}_{A\rightarrow B}$.
\end{proposition}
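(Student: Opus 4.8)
The plan is to mirror the two-term telescoping argument used just above for parallel composition (Proposition~\ref{prop:parallel-comp}), now adapted to serial composition. Fix an arbitrary density operator $\rho_{RA}$ acting on $\mathcal{H}_{R}\otimes\mathcal{H}_{A}$, where $\mathcal{H}_{R}$ is an arbitrary auxiliary separable Hilbert space. The goal is to bound $P((\operatorname{id}_{R}\otimes[\mathcal{M}_{B\rightarrow C}^{k}\circ\mathcal{N}_{A\rightarrow B}^{k}])(\rho_{RA}),(\operatorname{id}_{R}\otimes[\mathcal{M}_{B\rightarrow C}\circ\mathcal{N}_{A\rightarrow B}])(\rho_{RA}))$ and to show that it vanishes as $k\rightarrow\infty$, where as before $P(\tau,\omega)\equiv\sqrt{1-F(\tau,\omega)}$ obeys the triangle inequality.

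First I would insert the hybrid channel $\mathcal{M}_{B\rightarrow C}^{k}\circ\mathcal{N}_{A\rightarrow B}$ and apply the triangle inequality for $P$ to split the quantity of interest into the two terms $P((\mathcal{M}^{k}\circ\mathcal{N}^{k})(\rho_{RA}),(\mathcal{M}^{k}\circ\mathcal{N})(\rho_{RA}))$ and $P((\mathcal{M}^{k}\circ\mathcal{N})(\rho_{RA}),(\mathcal{M}\circ\mathcal{N})(\rho_{RA}))$, where I suppress $\operatorname{id}_{R}$ for brevity. For the first term, data processing of the fidelity under the channel $\operatorname{id}_{R}\otimes\mathcal{M}_{B\rightarrow C}^{k}$ gives $P((\mathcal{M}^{k}\circ\mathcal{N}^{k})(\rho_{RA}),(\mathcal{M}^{k}\circ\mathcal{N})(\rho_{RA}))\leq P(\mathcal{N}^{k}(\rho_{RA}),\mathcal{N}(\rho_{RA}))$, and the right-hand side tends to zero by the assumed strong convergence of $\{\mathcal{N}_{A\rightarrow B}^{k}\}_{k}$ applied to the state $\rho_{RA}$. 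For the second term, the key observation is that $\sigma_{RB}\equiv(\operatorname{id}_{R}\otimes\mathcal{N}_{A\rightarrow B})(\rho_{RA})$ is a \emph{fixed} density operator on $\mathcal{H}_{R}\otimes\mathcal{H}_{B}$, independent of $k$. Treating $\mathcal{H}_{R}$ as the arbitrary separable auxiliary Hilbert space in the definition of strong convergence for $\{\mathcal{M}_{B\rightarrow C}^{k}\}_{k}$, the second term equals $P((\operatorname{id}_{R}\otimes\mathcal{M}_{B\rightarrow C}^{k})(\sigma_{RB}),(\operatorname{id}_{R}\otimes\mathcal{M}_{B\rightarrow C})(\sigma_{RB}))$, which tends to zero by the assumed strong convergence of $\{\mathcal{M}_{B\rightarrow C}^{k}\}_{k}$ to $\mathcal{M}_{B\rightarrow C}$. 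Combining, $P(\cdots)$ is bounded by a sum of two vanishing terms, so $\lim_{k\rightarrow\infty}P(\cdots)=0$; squaring shows the infidelity vanishes, and since $\rho_{RA}$ was arbitrary this establishes strong convergence of $\{\mathcal{M}_{B\rightarrow C}^{k}\circ\mathcal{N}_{A\rightarrow B}^{k}\}_{k}$.

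There is no substantive obstacle here; the argument is a direct analogue of the parallel-composition proof. The only point deserving a moment's care is the second step, where one must recognize that the ``reference'' system in the definition of strong convergence may be completely arbitrary (any separable Hilbert space), so that the composite system $RB$ can legitimately serve that role when invoking strong convergence of $\{\mathcal{M}_{B\rightarrow C}^{k}\}_{k}$ on the fixed state $\sigma_{RB}$. As noted above, iterating this proposition recovers the serial-composition statement \eqref{eq:strong-conv-arbitrary} for adaptive discrimination protocols.
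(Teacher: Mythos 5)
Your proof is correct and follows essentially the same route as the paper's: insert the hybrid channel $\mathcal{M}^{k}\circ\mathcal{N}$, apply the triangle inequality for $P$, use data processing to reduce the first term to $P(\mathcal{N}^{k}(\rho_{RA}),\mathcal{N}(\rho_{RA}))$, and invoke strong convergence of $\{\mathcal{M}^{k}\}_{k}$ on the fixed state $\mathcal{N}_{A\rightarrow B}(\rho_{RA})$ for the second. Your explicit remark that the composite system $RB$ serves as the arbitrary auxiliary system in the definition of strong convergence is a point the paper leaves implicit, but the argument is the same.
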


\begin{proof}
A proof is similar to what is given above, and I give it for completeness. Let
$\rho_{RA}$ be an arbitrary state. Consider that%
\begin{align}
&  P((\mathcal{M}_{B\rightarrow C}^{k}\circ\mathcal{N}_{A\rightarrow B}%
^{k})(\rho_{RA}),(\mathcal{M}_{B\rightarrow C}\circ\mathcal{N}_{A\rightarrow
B})(\rho_{RA}))\nonumber\\
&  \leq P((\mathcal{M}_{B\rightarrow C}^{k}\circ\mathcal{N}_{A\rightarrow
B}^{k})(\rho_{RA}),(\mathcal{M}_{B\rightarrow C}^{k}\circ\mathcal{N}%
_{A\rightarrow B})(\rho_{RA}))\nonumber\\
&  +P((\mathcal{M}_{B\rightarrow C}^{k}\circ\mathcal{N}_{A\rightarrow B}%
)(\rho_{RA}),(\mathcal{M}_{B\rightarrow C}\circ\mathcal{N}_{A\rightarrow
B})(\rho_{RA}))\nonumber\\
&  \leq P(\mathcal{N}_{A\rightarrow B}^{k}(\rho_{RA}),\mathcal{N}%
_{A\rightarrow B}(\rho_{RA}))\nonumber\\
&  +P((\mathcal{M}_{B\rightarrow C}^{k}\circ\mathcal{N}_{A\rightarrow B}%
)(\rho_{RA}),(\mathcal{M}_{B\rightarrow C}\circ\mathcal{N}_{A\rightarrow
B})(\rho_{RA})). \label{eq:serial-comp-str-conv}%
\end{align}
The first inequality follows from the triangle inequality and the second from
data processing. Using the strong convergence of $\{\mathcal{N}_{A\rightarrow
B}^{k}\}_{k}$ and $\{\mathcal{M}_{B\rightarrow C}^{k}\}_{k}$, the fact that
$\mathcal{N}_{A\rightarrow B}(\rho_{RA})$ is a fixed state independent of $k$,
applying the inequality in \eqref{eq:serial-comp-str-conv}, and taking the
limit $k\rightarrow\infty$, we find that%
\begin{multline}
\!\!\!\!\lim_{k\rightarrow\infty}P((\mathcal{M}_{B\rightarrow C}^{k}%
\circ\mathcal{N}_{A\rightarrow B}^{k})(\rho_{RA}),(\mathcal{M}_{B\rightarrow
C}\circ\mathcal{N}_{A\rightarrow B})(\rho_{RA}))\\
=0.
\end{multline}
Since the state $\rho_{RA}$ was arbitrary, the proof is complete.
\end{proof}

\subsection{Strong convergence in the teleportation simulation of bosonic
Gaussian channels}

\label{sec:tele-sim}

The teleportation simulation of a bosonic Gaussian channel is another
important notion to discuss. As found in \cite{NFC09}, single-mode,
phase-covariant bosonic channels, such as the thermal, amplifier, or
additive-noise channels, can be simulated by employing the bosonic
teleportation protocol from \cite{prl1998braunstein}. More general classes of
bosonic Gaussian channels can be simulated as well \cite{WPG07}. In this subsection, I
exclusively discuss single-mode bosonic Gaussian channels and extend the results later to particular multi-mode bosonic Gaussian channels. Denoting the
original channel by $\mathcal{G}$, an unideal teleportation simulation of it
realizes the bosonic Gaussian channel $\mathcal{G}^{\bar{\sigma}}%
\equiv\mathcal{G}\circ\mathcal{T}^{\bar{\sigma}}$, where $\mathcal{T}%
^{\bar{\sigma}}$ is the additive-noise channel from
\eqref{eq:TP-as-add-noise}. This unideal teleportation simulation is possible
due to the displacement covariance of bosonic Gaussian channels. Again, it is
needed to clarify the meaning of the convergence $\mathcal{G}=\lim
_{\bar{\sigma}\rightarrow0}\mathcal{G}^{\bar{\sigma}}$. Based on the previous
discussions in this paper, it is clear that the convergence should be
considered in the strong sense in most applications: for a state $\rho_{RA}$,
we have that%
\begin{equation}
\lim_{\bar{\sigma}\rightarrow0}\left[  1-F((\operatorname{id}_{R}%
\otimes\mathcal{G}_{A})(\rho_{RA}),(\operatorname{id}_{R}\otimes
\mathcal{G}_{A}^{\bar{\sigma}})(\rho_{RA}))\right]  =0,
\end{equation}
where $F$ denotes the quantum fidelity. This equality follows as a consequence
of \eqref{eq:fixed-state-converge} and the data-processing inequality for fidelity.

As a consequence of \eqref{eq:n-converge-tele}\ and data processing, we also
have the following convergence for a teleportation simulation of the
tensor-power channel $\mathcal{G}^{\otimes n}$: for a state $\rho_{RA^{n}}$,
we have that%
\begin{equation}
\lim_{\bar{\sigma}\rightarrow0}\left[  1-F(\mathcal{G}_{A}^{\otimes n}%
(\rho_{RA^{n}}),(\mathcal{G}_{A}^{\bar{\sigma}})^{\otimes n}(\rho_{RA^{n}%
}))\right]  =0,
\end{equation}
where the identity map $\operatorname{id}_{R}$ is omitted for simplicity.
%Let
%us once again emphasize that this notion of convergence is different from the
%uniform sense.

Finally, the argument from Section~\ref{sec:strong-arbitrary}\ applies to the
teleportation simulation of bosonic Gaussian channels as well. In more detail,
the strong convergence of $\mathcal{G}^{\bar{\sigma}}$ to $\mathcal{G}$ in the
limit $\bar{\sigma}\rightarrow0$ implies strong convergence of $n$ uses of
$\mathcal{G}^{\bar{\sigma}}$ to $n$ uses of $\mathcal{G}$ in the general sense
discussed in Section~\ref{sec:strong-arbitrary}. That is, as a consequence of
\eqref{eq:TP-infidelity} and \eqref{eq:strong-conv-arbitrary}, we have that%
\begin{equation}
\sup_{\mathcal{P}^{(n)}}\lim_{\bar{\sigma}\rightarrow0}\varepsilon_{\text{ad}%
}^{(n)}(\bar{\sigma},\mathcal{P}^{(n)})=0,
\label{eq:strong-converge-tele-adap}%
\end{equation}
where $\varepsilon_{\text{ad}}^{(n)}(\bar{\sigma},\mathcal{P}^{(n)})$ is
defined by replacing $\mathcal{N}_{A\rightarrow B}^{k}$ with $\mathcal{G}%
^{\bar{\sigma}}$ and $\mathcal{N}_{A\rightarrow B}$ with $\mathcal{G}$ in
\eqref{eq:omega-k}, \eqref{eq:omega}, and \eqref{eq:adap-err}.

%Again, the strong convergence for any adaptive strategy is drastically
%different from
%\begin{equation}
%\lim_{\bar{\sigma}\rightarrow0}\sup_{\mathcal{P}^{(n)}}\varepsilon_{\text{ad}%
%}^{(n)}(\bar{\sigma},\mathcal{P}^{(n)})=1,\label{eq:bad-converge-adap}%
%\end{equation}
%which follows from the development in the appendix (for certain choices of the
%channel $\mathcal{G}$) and because an $n$-fold tensor-product version of the
%strategy around \eqref{eq:bad-converge} represents a particular strategy for
%distinguishing the two channels $\mathcal{G}_{A}^{\bar{\sigma}}$ and
%$\mathcal{G}_{A}$ in the uniform topology.

\subsection{Uniform convergence in the teleportation simulations of pure-loss,
thermal, pure-amplifier, amplifier, and additive-noise channels}

\label{sec:unif-conv-loss-amp}I now prove that the teleportation simulations
of pure-loss, thermal, pure-amplifier, amplifier, and additive-noise channels
converge \textit{uniformly} to the original channels, in the limit of ideal
squeezing and detection for the simulations. The argument for uniform
convergence is elementary, using the structure of these channels and their
teleportation simulations, as well as a data processing argument that is the
same as that which was employed in \cite{TW16,SWAT17}. Note that these uniform
convergence results are in contrast to the teleportation simulation of the
ideal channel, where the convergence occurs in the strong sense but
\textit{not} in the uniform sense.

To prove the uniform convergence of the teleportation simulations of the
aforementioned channels, let us start with the thermal channel. Consider that
the thermal channel $\mathcal{L}_{\eta,N_{B}}$\ of transmissivity $\eta
\in(0,1)$ and thermal photon number $N_{B}\geq0$ is completely specified by
its action on the $2\times1$ mean vector $s$ and $2\times2$ covariance matrix
$V$ of a single-mode input \cite{S17}:%
\begin{align}
s  &  \rightarrow Xs,\label{eq:mean-action}\\
V  &  \rightarrow XVX^{T}+Y, \label{eq:cov-action}%
\end{align}
where%
\begin{align}
X  &  =\sqrt{\eta}I_{2},\\
Y  &  =(1-\eta)(2N_{B}+1)I_{2},
\end{align}
and $I_{2}$ denotes the $2\times2$ identity matrix. An unideal teleportation
simulation of a thermal channel is equivalent to the serial concatenation of
the additive-noise channel $\mathcal{T}^{\bar{\sigma}}$\ with variance
$\bar{\sigma}>0$, followed by the thermal channel $\mathcal{L}_{\eta,N_{B}}$,
as discussed in Section~\ref{sec:tele-sim}. Since the additive-noise channel
has the same action as in \eqref{eq:mean-action} and \eqref{eq:cov-action},
but with%
\begin{align}
X  &  =I_{2},\\
Y  &  = 2 \bar{\sigma}I_{2},
\end{align}
we find, after composing and simplifying, that the simulating channel
$\mathcal{L}_{\eta,N_{B}}\circ\mathcal{T}^{\bar{\sigma}}$ has the same action
as in \eqref{eq:mean-action} and \eqref{eq:cov-action}, but with%
\begin{align}
X  &  =\sqrt{\eta}I_{2},\\
Y  &  =\left[  \eta2 \bar{\sigma}+(1-\eta)(2N_{B}+1)\right]  I_{2}\\
&  =(1-\eta)(2\left[  N_{B}+\eta\bar{\sigma}/(1-\eta)\right]  +1)I_{2}.
\end{align}
This latter finding means that the simulating channel $\mathcal{L}_{\eta
,N_{B}}\circ\mathcal{T}^{\bar{\sigma}}$ is equivalent to the thermal channel
$\mathcal{L}_{\eta,N_{B}+\eta\bar{\sigma}/(1-\eta)}$, i.e.,%
\begin{equation}
\mathcal{L}_{\eta,N_{B}}\circ\mathcal{T}^{\bar{\sigma}}=\mathcal{L}%
_{\eta,N_{B}+\eta\bar{\sigma}/(1-\eta)}. \label{eq:tp-sim-therm-is-therm}%
\end{equation}
Let us set%
\begin{equation}
N_{B}^{\prime}\equiv N_{B}+\eta\bar{\sigma}/(1-\eta).
\end{equation}

Note that any thermal channel $\mathcal{L}_{\eta,N_{B}}$ can be realized in
three steps:

\begin{enumerate}
\item prepare an environment mode in a thermal state $\theta(N_{B})$\ of mean
photon number $N_{B}\geq0$, where%
\begin{equation}
\theta(N_{B})=\frac{1}{N_{B}+1}\sum_{n=0}^{\infty}\left(  \frac{N_{B}}%
{N_{B}+1}\right)  ^{n}|n\rangle\langle n|,
\end{equation}

\item interact the channel input mode with the environment mode at a unitary
beamsplitter $\mathcal{B}_{\eta}$ of transmissivity $\eta$, and

\item discard the environment mode.
\end{enumerate}

\noindent This observation and that in \eqref{eq:tp-sim-therm-is-therm}\ are
what lead to uniform convergence of the simulating channel $\mathcal{L}%
_{\eta,N_{B}^{\prime}}$ to the original channel $\mathcal{L}_{\eta,N_{B}}$ in
the limit as $\bar{\sigma}\rightarrow0$. Indeed, let $\rho_{RA}$ be an
arbitrary input state, with $R$ a reference system corresponding to an
arbitrary separable Hilbert space and system $A$ the channel input. Then we
find that%
\begin{align}
&  P((\operatorname{id}_{R}\otimes\mathcal{L}_{\eta,N_{B}})(\rho
_{RA}),(\operatorname{id}_{R}\otimes\mathcal{L}_{\eta,N_{B}^{\prime}}%
)(\rho_{RA}))\nonumber\\
&  \leq P((\operatorname{id}_{R}\otimes\mathcal{B}_{\eta})[\rho_{RA}%
\otimes\theta(N_{B})],(\operatorname{id}_{R}\otimes\mathcal{B}_{\eta}%
)[\rho_{RA}\otimes\theta(N_{B}^{\prime})])\nonumber\\
&  =P(\rho_{RA}\otimes\theta(N_{B}),\rho_{RA}\otimes\theta(N_{B}^{\prime
}))\nonumber\\
&  =P(\theta(N_{B}),\theta(N_{B}^{\prime}))\nonumber\\
&  \equiv e(N_{B},\eta,\bar{\sigma}), \label{eq:uniform-converge-bnd}%
\end{align}
where $e(N_{B},\eta,\bar{\sigma})$ explicitly evaluates to%
\begin{multline}
e(N_{B},\eta,\bar{\sigma})=\label{eq:explicit-eval}\\
\Bigg[1-\Big[\sqrt{\left(  N_{B}+1\right)  \left(  N_{B}+\eta\bar{\sigma
}/(1-\eta)+1\right)  }\\
-\sqrt{N_{B}\left[  N_{B}+\eta\bar{\sigma}/(1-\eta)\right]  }\Big]^{-2}%
\Bigg]^{1/2}.
\end{multline}
The explicit evaluation in \eqref{eq:explicit-eval}\ is a direct consequence
of \cite[Eqs.~(34)--(35)]{TW16}, found by evaluating the fidelity between two
thermal states of respective mean photon numbers $N_{B}$ and $N_{B}+\eta
\bar{\sigma}/(1-\eta)$. See also \cite{H75,S98} for formulas for the fidelity of single-mode Gaussian states. The first inequality in
\eqref{eq:uniform-converge-bnd} follows from data processing. The first
equality follows from unitary invariance of the metric $P$ and the second from
its invariance under tensoring in the same state $\rho_{RA}$. To summarize the
inequality in \eqref{eq:uniform-converge-bnd}, it is stating that the
distinguishability of the channels $\mathcal{L}_{\eta,N_{B}}$ and
$\mathcal{L}_{\eta,N_{B}^{\prime}}$, when allowing for any input probe state
$\rho_{RA}$, is limited by the distinguishability of the environment states
$\theta(N_{B})$ and $\theta(N_{B}^{\prime})$, and this is similar to the
observations made in \cite{TW16,SWAT17}. Thus, the bound in
\eqref{eq:uniform-converge-bnd}\ is a \textit{uniform} bound, holding for all
input states $\rho_{RA}$, and so we conclude that%
\begin{multline}
\sup_{\rho_{RA}}P((\operatorname{id}_{R}\otimes\mathcal{L}_{\eta,N_{B}}%
)(\rho_{RA}),(\operatorname{id}_{R}\otimes\mathcal{L}_{\eta,N_{B}^{\prime}%
})(\rho_{RA}))\\
\leq e(N_{B},\eta,\bar{\sigma}). \label{eq:loss-unif-bnd}%
\end{multline}
Now taking the limit $\bar{\sigma}\rightarrow0$ and using the fact that
$\lim_{\bar{\sigma}\rightarrow0}e(N_{B},\eta,\bar{\sigma})=0$ for $\eta
\in(0,1)$ and $N_{B}\geq0$, we find that%
\begin{equation}
\lim_{\bar{\sigma}\rightarrow0}\sup_{\rho_{RA}}P((\operatorname{id}_{R}%
\otimes\mathcal{L}_{\eta,N_{B}})(\rho_{RA}),(\operatorname{id}_{R}%
\otimes\mathcal{L}_{\eta,N_{B}^{\prime}})(\rho_{RA}))=0.
\end{equation}
Thus, the teleportation simulation $\mathcal{L}_{\eta,N_{B}}\circ
\mathcal{T}^{\bar{\sigma}}$ of the thermal channel $\mathcal{L}_{\eta,N_{B}}$
of transmissivity $\eta\in(0,1)$ and thermal photon number $N_{B}\geq0$
converges \textit{uniformly} to the thermal channel.

The above uniform convergence result holds in particular for a pure-loss
channel of transmissivity $\eta\in(0,1)$, because this channel is a thermal
channel with $N_{B}=0$. That is, the environment state for the pure-loss
channel is a vacuum state, and its teleportation simulation is a thermal
channel with the same transmissivity and environment state given by a thermal
state of mean photon number $\eta\bar{\sigma}/(1-\eta)$. In this case, the
uniform upper bound $e(N_{B}=0,\eta,\bar{\sigma})$ simplifies to%
\begin{equation}
e(N_{B}=0,\eta,\bar{\sigma})=\sqrt{1-\frac{1}{\eta\bar{\sigma}/(1-\eta)+1}},
\end{equation}
for which we clearly have that $\lim_{\bar{\sigma}\rightarrow0}e(N_{B}%
=0,\eta,\bar{\sigma})=0$. Thus, the teleportation simulation of a pure-loss
channel $\mathcal{L}_{\eta,N_{B}=0}$ converges \textit{uniformly} to
$\mathcal{L}_{\eta,N_{B}=0}$.

Similar results hold for the pure-amplifier and amplifier channels. Indeed, to
see this, let us begin by considering a general amplifier channel
$\mathcal{A}_{G,N_{B}}$\ of gain $G>1$ and thermal photon number $N_{B}\geq0$.
Such a channel has the action as in \eqref{eq:mean-action} and
\eqref{eq:cov-action}, but with%
\begin{align}
X  &  =\sqrt{G}I_{2},\\
Y  &  =(G-1)(2N_{B}+1)I_{2}.
\end{align}
By similar reasoning as before, the teleportation simulation $\mathcal{A}%
_{G,N_{B}}\circ\mathcal{T}^{\bar{\sigma}}$ of the amplifier channel has the
action as in \eqref{eq:mean-action} and \eqref{eq:cov-action}, but with%
\begin{align}
X  &  =\sqrt{G}I_{2},\\
Y  &  =\left[  G 2 \bar{\sigma}+(G-1)(2N_{B}+1)\right]  I_{2}\\
&  =(G-1)(2\left[  N_{B}+G\bar{\sigma}/(G-1)\right]  +1)I_{2}.
\end{align}
Thus, the teleportation simulation $\mathcal{A}_{G,N_{B}}\circ\mathcal{T}%
^{\bar{\sigma}}$\ is equivalent to an amplifier channel $\mathcal{A}%
_{G,N_{B}+G\bar{\sigma}/(G-1)}$:%
\begin{equation}
\mathcal{A}_{G,N_{B}}\circ\mathcal{T}^{\bar{\sigma}}=\mathcal{A}%
_{G,N_{B}+G\bar{\sigma}/(G-1)}.
\end{equation}
An amplifier channel $\mathcal{A}_{G,N_{B}}$ can be realized by the following
three steps:

\begin{enumerate}
\item prepare an environment mode in a thermal state $\theta(N_{B})$\ of mean
photon number $N_{B}\geq0$,

\item interact the channel input mode with the environment mode using a
unitary two-mode squeezer $\mathcal{S}_{G}$ of gain $G$, and

\item discard the environment mode.
\end{enumerate}

For an arbitrary state $\rho_{RA}$, we find the following upper bound by the
same reasoning as in \eqref{eq:uniform-converge-bnd}, but replacing the
beamsplitter $\mathcal{B}_{\eta}$\ therein by the two-mode squeezer
$\mathcal{S}_{G}$,%
\begin{multline}
P((\operatorname{id}_{R}\otimes\mathcal{A}_{G,N_{B}})(\rho_{RA}%
),(\operatorname{id}_{R}\otimes\mathcal{A}_{G,N_{B}^{\prime\prime}})(\rho
_{RA}))\label{eq:amp-unif-bnd}\\
\leq P(\theta(N_{B}),\theta(N_{B}^{\prime\prime}))\equiv e(N_{B},G,\bar
{\sigma})
\end{multline}
where%
\begin{equation}
N_{B}^{\prime\prime}\equiv N_{B}+G\bar{\sigma}/(G-1)
\end{equation}
and%
\begin{multline}
e(N_{B},G,\bar{\sigma})=\\
\Bigg[1-\Big[\sqrt{\left(  N_{B}+1\right)  \left(  N_{B}+G\bar{\sigma
}/(G-1)+1\right)  }\\
-\sqrt{N_{B}\left[  N_{B}+G\bar{\sigma}/(G-1)\right]  }\Big]^{-2}\Bigg]^{1/2}.
\end{multline}
Again, the inequality in \eqref{eq:amp-unif-bnd}\ is the statement that the
distinguishability of the channels $\mathcal{A}_{G,N_{B}}$ and $\mathcal{A}%
_{G,N_{B}^{\prime\prime}}$, when allowing for any input probe state $\rho
_{RA}$, is limited by the distinguishability of the channel environment states
$\theta(N_{B})$ and $\theta(N_{B}^{\prime\prime})$. Given that the bound in
\eqref{eq:amp-unif-bnd} is a \textit{uniform} bound holding for all input
states $\rho_{RA}$, this implies that%
\begin{multline}
\sup_{\rho_{RA}}P((\operatorname{id}_{R}\otimes\mathcal{A}_{G,N_{B}}%
)(\rho_{RA}),(\operatorname{id}_{R}\otimes\mathcal{A}_{G,N_{B}^{\prime\prime}%
})(\rho_{RA}))\\
\leq e(N_{B},G,\bar{\sigma}) \label{eq:amp-unif-bnd-final}%
\end{multline}
We can then take the limit $\bar{\sigma}\rightarrow0$ and use the fact that
$\lim_{\bar{\sigma}\rightarrow0}e(N_{B},G,\bar{\sigma})=0$ for all $G>1$ and
$N_{B}\geq0$ to find that%
\begin{equation}
\lim_{\bar{\sigma}\rightarrow0}\sup_{\rho_{RA}}P((\operatorname{id}_{R}%
\otimes\mathcal{A}_{G,N_{B}})(\rho_{RA}),(\operatorname{id}_{R}\otimes
\mathcal{A}_{G,N_{B}^{\prime\prime}})(\rho_{RA}))=0.
\end{equation}
Thus, the teleportation simulation $\mathcal{A}_{G,N_{B}}\circ\mathcal{T}%
^{\bar{\sigma}}$ of the amplifier channel $\mathcal{A}_{G,N_{B}}$\ converges
\textit{uniformly} to it, for all $G>1$ and thermal photon number $N_{B}\geq0$.

The pure-amplifier channel is a special case of the amplifier channel
$\mathcal{A}_{G,N_{B}}$ with $N_{B}=0$, so that the above analysis applies and
the teleportation simulation of the pure-amplifier channel $\mathcal{A}%
_{G,N_{B}=0}$\ converges \textit{uniformly} to it. Indeed, the uniform upper
bound $e(N_{B}=0,G,\bar{\sigma})$ simplifies as%
\begin{multline}
P((\operatorname{id}_{R}\otimes\mathcal{A}_{G,N_{B}=0})(\rho_{RA}%
),(\operatorname{id}_{R}\otimes\mathcal{A}_{G,G\bar{\sigma}/(G-1)})(\rho
_{RA}))\\
\leq e(N_{B}=0,G,\bar{\sigma})=\sqrt{1-\frac{1}{G\bar{\sigma}/(G-1)+1}},
\end{multline}
and so it is clear that%
\begin{equation}
\lim_{\bar{\sigma}\rightarrow0}\sup_{\rho_{RA}}P(\mathcal{A}_{G,0}(\rho
_{RA}),\mathcal{A}_{G,G\bar{\sigma}/(G-1)}(\rho_{RA}))=0,
\end{equation}
where I have omitted the identity maps $\operatorname{id}_{R}$ acting on
system $R$ for simplicity.

A similar argument establishes that the teleportation simulation of the
additive-noise channel $\mathcal{T}^{\xi}$\ with variance $\xi>0$ converges
uniformly to it. To see this, let us begin by noting that any additive-noise
channel $\mathcal{T}^{\xi}$ can be realized by the following three steps:

\begin{enumerate}
\item Prepare a continuous classical environment register according to the
complex Gaussian distribution $G_{\xi}(\alpha)$, as defined in \eqref{eq:complex-Gauss}.

\item Based on the classical value $\alpha$ in the environment register, apply
a unitary displacement operation $D(\alpha)$ to the channel input. This step
can be described as an interaction channel $\mathcal{C}$ between the channel
input and the environment register.

\item Finally, discard the environment register.
\end{enumerate}

Also, note that the fidelity between two complex Gaussian distributions of
variances $\xi_{1},\xi_{2}>0$ is given by%
\begin{equation}
F(G_{\xi_{1}},G_{\xi_{2}})=\frac{4\xi_{1}\xi_{2}}{\left(  \xi_{1}+\xi
_{2}\right)  ^{2}}, \label{eq:fid-classical-Gauss}%
\end{equation}
which one may verify directly or consult \cite{CA79}. Then, proceeding as in
the previous proofs, we have for an arbitrary input state $\rho_{RA}$ that%
\begin{align}
&  P((\operatorname{id}_{R}\otimes\mathcal{T}^{\xi})(\rho_{RA}%
),(\operatorname{id}_{R}\otimes\mathcal{T}^{\xi+\bar{\sigma}})(\rho
_{RA}))\nonumber\\
&  \leq P((\operatorname{id}_{R}\otimes\mathcal{C})(\rho_{RA}\otimes G_{\xi
}),(\operatorname{id}_{R}\otimes\mathcal{C})(\rho_{RA}\otimes G_{\xi
+\bar{\sigma}}))\nonumber\\
&  \leq P(\rho_{RA}\otimes G_{\xi},\rho_{RA}\otimes G_{\xi+\bar{\sigma}%
})\nonumber\\
&  =P(G_{\xi},G_{\xi+\bar{\sigma}})\nonumber\\
&  =\sqrt{1-\frac{4\xi(\xi+\bar{\sigma})}{(2\xi+\bar{\sigma})^{2}}}.
\label{ea:add-noise-steps-unif-bnd}%
\end{align}
The first and second inequalities follow from data processing. The first
equality follows because the metric $P$ is invariant under tensoring in the
same state $\rho_{RA}$. The final equality follows from the definition of the
metric $P$ and the formula in \eqref{eq:fid-classical-Gauss}. As in the other
proofs, the inequality in \eqref{ea:add-noise-steps-unif-bnd} is intuitive,
indicating that the distinguishability of the channels $\mathcal{T}^{\xi}$ and
$\mathcal{T}^{\xi+\bar{\sigma}}$ is limited by the distinguishability of the
underlying classical distributions $G_{\xi}$ and $G_{\xi+\bar{\sigma}}$. The
bound in \eqref{ea:add-noise-steps-unif-bnd}\ is a \textit{uniform} bound,
holding for all input states $\rho_{RA}$, and so we conclude that%
\begin{multline}
\sup_{\rho_{RA}}P((\operatorname{id}_{R}\otimes\mathcal{T}^{\xi})(\rho
_{RA}),(\operatorname{id}_{R}\otimes\mathcal{T}^{\xi+\bar{\sigma}})(\rho
_{RA}))\\
\leq\sqrt{1-\frac{4\xi(\xi+\bar{\sigma})}{(2\xi+\bar{\sigma})^{2}}}.
\label{eq:add-noise-unif-upp-bnd}%
\end{multline}
Finally, we take the limit as $\bar{\sigma}\rightarrow0$ to establish the
uniform convergence of the teleportation simulation of the additive-noise
channel of variance $\xi>0$ to itself:%
\begin{equation}
\lim_{\bar{\sigma}\rightarrow0}\sup_{\rho_{RA}}P((\operatorname{id}_{R}%
\otimes\mathcal{T}^{\xi})(\rho_{RA}),(\operatorname{id}_{R}\otimes
\mathcal{T}^{\xi+\bar{\sigma}})(\rho_{RA}))=0.
\end{equation}

%\begin{remark}
%The above proofs are quite elementary and in fact dispel one of the main
%criticisms of Pirandola et al.~from \cite{P17}. Namely, Pirandola et
%al.~claimed in \cite{P17} that the teleportation simulation of the pure-loss
%channel does not converge uniformly to it and used this as their main
%criticism of the proof of \cite[Theorem~24]{WTB16}, but this assertion of
%Pirandola et al.~is clearly false in light of the above proofs. Thus, the main
%criticism of the proof of \cite[Theorem~24]{WTB16} from \cite{P17} is based on
%an incorrect calculation in \cite[Appendix~A]{P17}.
%\end{remark}

\begin{remark}
By the results of \cite{TW16}, the uniform upper bounds in
\eqref{eq:loss-unif-bnd}, \eqref{eq:amp-unif-bnd-final}, and
\eqref{eq:add-noise-unif-upp-bnd} are all achievable and are thus equalities.
The achievable strategy consists of taking the input states $\rho_{RA}$ to be
a sequence of two-mode squeezed vacuum states with photon number $N_{S}$
tending to infinity.
\end{remark}

\begin{remark}
On the one hand, the thermal, amplifier, and additive-noise channels are the single-mode
bosonic Gaussian channels that are of major interest in applications, as
stressed in \cite[Section~3.5]{HG12} and \cite[Section~12.6.3]{H12}. On the
other hand, one could consider generalizing the results of this section to
arbitrary single-mode bosonic Gaussian channels.
%, as has been done recently in arXiv:1712.01615v4, three months after the posting of the present paper to the arXiv and by using the methods given in this section.
In doing so, one should consider the Holevo classification of single-mode
bosonic Gaussian channels \cite{Holevo2007}. However, there is little reason
to generalize the contents of this section to other channels in the Holevo
classification. The thermal and amplifier channels form the class $C$
discussed in \cite{Holevo2007}, and the additive-noise channels form the class
$B_{2}$ from \cite{Holevo2007}, which I have already considered in this
section. The classes that remain are labeled $A$, $B_{1}$, and $D$. The
channels in classes $A$ and $D$ are \textit{entanglement breaking}, as proved
in \cite{Holevo2008}. Thus, the channels in classes $A$ and $D$ can be
directly realized by the action of an LOCC on the input state (without any
need for an entangled resource state), and thus we would never have any reason
to be interested in the teleportation simulation of these channels. The final
remaining class is $B_{1}$,
%has been considered in arXiv:1712.01615v4, where it was shown that the teleportation simulation of channels in this class converges strongly but not uniformly, similar to what occurs for the identity channel. However,
but channels in the class $B_{1}$ do not seem to be interesting in physical applications.
\end{remark}

\begin{remark}
The class $B_{1}$ has been considered in \cite{P18}, where it was
shown that the teleportation simulation of a channel in this class does not
converge uniformly, similar to what occurs for the identity channel. Based on
the above, and the fact that the ideal channel and channels in the class
$B_{1}$ have unconstrained quantum capacity equal to infinity
\cite{Holevo2007}, as well as the fact that channels in the classes $C$ and
$B_{2}$ have finite unconstrained quantum capacity \cite{SWAT17}, we can
conclude that, among the single-mode bosonic Gaussian channels that are not
entanglement breaking, their teleportation simulations converge uniformly if
and only if their unconstrained quantum capacity is finite. This establishes a
non-trivial link between teleportation simulation and unconstrained quantum capacity.
\end{remark}

\subsection{Generalization of uniform convergence results to multi-mode bosonic
Gaussian channels}

In this section, I discuss a generalization of the results of the previous
section to the case of multi-mode bosonic Gaussian channels \footnote{I am
indebted to an anonymous referee for suggesting this multi-mode
generalization.}. Before doing so, I give a brief review of bosonic Gaussian
states and channels (see \cite{CEGH08} for a more comprehensive review). Let%
\begin{equation}
\hat{R}\equiv\left[  \hat{q}_{1},\ldots,\hat{q}_{m},\hat{p}_{1},\ldots,\hat
{p}_{m}\right]  \equiv\left[  \hat{x}_{1},\ldots,\hat{x}_{2m}\right]
\end{equation}
denote a row vector of position- and momentum-quadrature operators, satisfying
the canonical commutation relations:%
\begin{equation}
\left[  \hat{R}_{j},\hat{R}_{k}\right]  =i\Omega_{j,k},\quad\text{where}%
\quad\Omega\equiv%
\begin{bmatrix}
0 & 1\\
-1 & 0
\end{bmatrix}
\otimes I_{m},
\end{equation}
and $I_{m}$ denotes the $m\times m$ identity matrix. We take the annihilation
operator for the $j$th mode as $\hat{a}_{j}=(\hat{q}_{j}+i\hat{p}_{j}%
)/\sqrt{2}$. For $z$ a column vector in $\mathbb{R}^{2m}$, we define the
unitary displacement operator $D(z)=D^{\dagger}(-z)\equiv\exp(i\hat{R}z)$.
Displacement operators satisfy the following relation:%
\begin{equation}
D(z)D(z^{\prime})=D(z+z^{\prime})\exp\!\left(  -\frac{i}{2}z^{T}\Omega
z^{\prime}\right)  .
\end{equation}
Every state $\rho\in\mathcal{D}(\mathcal{H})$ has a corresponding Wigner
characteristic function, defined as%
\begin{equation}
\chi_{\rho}(z)\equiv\operatorname{Tr}\{D(z)\rho\},
\end{equation}
and from which we can obtain the state $\rho$ as%
\begin{equation}
\rho=\int\frac{d^{2m}z}{\left(  2\pi\right)  ^{m}}\ \chi_{\rho}(z)\ D^{\dag
}(z).
\end{equation}
A quantum state $\rho$ is Gaussian if its Wigner characteristic function has a
Gaussian form as%
\begin{equation}
\chi_{\rho}(\xi)=\exp\left(  -\frac{1}{4}z^{T}V^{\rho}z+i\left[  \mu^{\rho
}\right]  ^{T}z\right)  ,
\end{equation}
where $\mu^{\rho}$ is the $2m\times1$ mean vector of $\rho$, whose entries are
defined by $\mu_{j}^{\rho}\equiv\langle\hat{R}_{j}\rangle_{\rho}$ and
$V^{\rho}$ is the $2m\times2m$ covariance matrix of $\rho$, whose entries are
defined as%
\begin{equation}
V_{j,k}^{\rho}\equiv\langle\{\hat{R}_{j}-\mu_{j}^{\rho},\hat{R}_{k}-\mu
_{k}^{\rho}\}\rangle_{\rho}.
\end{equation}
The following condition holds for a valid covariance matrix: $V\geq i\Omega$,
which is a manifestation of the uncertainty principle.

A $2m\times2m$ matrix $S$ is symplectic if it preserves the symplectic form:
$S\Omega S^{T}=\Omega$. According to Williamson's theorem \cite{W36}, there is
a diagonalization of the covariance matrix $V^{\rho}$ of the form,
\begin{equation}
V^{\rho}=S^{\rho}\left(  D^{\rho}\oplus D^{\rho}\right)  \left(  S^{\rho
}\right)  ^{T},
\end{equation}
where $S^{\rho}$ is a symplectic matrix and $D^{\rho}\equiv\operatorname{diag}%
(\nu_{1},\ldots,\nu_{m})$ is a diagonal matrix of symplectic eigenvalues such
that $\nu_{i}\geq1$ for all $i\in\left\{  1,\ldots,m\right\}  $. Computing
this decomposition is equivalent to diagonalizing the matrix $iV^{\rho}\Omega$
\cite[Appendix~A]{WTLB16}.

The Hilbert--Schmidt adjoint of a Gaussian quantum channel $\mathcal{N}_{X,Y}%
$\ from $m$ modes to $m$ modes has the following effect on a displacement
operator $D(z)$ \cite{CEGH08}:%
\begin{equation}
D(z)\longmapsto D(Xz)\exp\left(  -\frac{1}{4}z^{T}Yz+iz^{T}d\right)
,\label{eq:G-chan-1}%
\end{equation}
where $X$ is a real $2m\times2m$ matrix, $Y$ is a real $2m\times2m$ positive
semi-definite matrix, and $d\in\mathbb{R}^{2m}$, such that they satisfy%
\begin{equation}
Y\geq i\left[  \Omega-X^{T}\Omega X\right]  .\label{eq:Gaussian-CP-condition}%
\end{equation}
The effect of the channel on the mean vector $\mu^{\rho}$ and the covariance
matrix $V^{\rho}$\ is thus as follows:%
\begin{align}
\mu^{\rho} &  \longmapsto X^{T}\mu^{\rho}+d,\\
V^{\rho} &  \longmapsto X^{T}V^{\rho}X+Y.\label{eq:G-chan-3}%
\end{align}
All Gaussian channels are covariant with respect to displacement operators,
and this is the main reason why they are teleportation simulable, as noted in
\cite{WPG07,NFC09}. That is, the following relation holds%
\begin{equation}
\mathcal{N}_{X,Y}(D(z)\rho D^{\dag}(z))=D(X^{T}z)\mathcal{N}_{X,Y}%
(\rho)D^{\dag}(X^{T}z).\label{eq:covariance-gaussian}%
\end{equation}

Just as every quantum channel can be implemented as a unitary transformation
on a larger space followed by a partial trace \cite{S55}, so can Gaussian
channels be implemented as a Gaussian unitary on a larger space with some
extra modes prepared in the vacuum state, followed by a partial trace
\cite{CEGH08}. Given a Gaussian channel $\mathcal{N}_{X,Y}$\ with $Z$ such
that $Y=ZZ^{T}$ we can find two other matrices $X_{E}$ and $Z_{E}$ such that
there is a symplectic matrix
\begin{equation}
S=%
\begin{bmatrix}
X^{T} & Z\\
X_{E}^{T} & Z_{E}%
\end{bmatrix}
,\label{eq:gaussian-dilation}%
\end{equation}
which corresponds to the Gaussian unitary transformation on a larger space.

Alternatively, for certain Gaussian channels, there is a realization that is
analogous to those discussed in the previous section for the thermal,
amplifier, and additive-noise channels. In particular, consider a Gaussian
channel with $X$ and $Y$ matrices as discussed above, and suppose that the
mean vector $d=0$ (note that the condition $d=0$ is not particularly
restrictive because it just corresponds to a unitary displacement at the input
or output of the channel, and capacities or distinguishability measures do not
change under such unitary actions). Whenever the matrix $\Omega-X^{T}\Omega X$
is full rank, implying then by \eqref{eq:Gaussian-CP-condition} that $Y$ is
full rank, the Gaussian channel can be realized as follows \cite[Theorem~1]%
{CEGH08}:%
\begin{equation}
\rho_{A}\rightarrow\mathcal{N}_{X,Y}(\rho_{A})=\operatorname{Tr}_{E}%
\{U_{AE}(\rho_{A}\otimes\gamma_{E}(Y))U_{AE}^{\dag}\}, \label{eq:multi-mode-env}
\end{equation}
where $\rho_{A}$ is the $m$-mode input, $\gamma_{E}(Y)$ is a zero-mean, $m$-mode
Gaussian state with covariance matrix given by $KYK^{T}$, where $K$ is the
invertible matrix discussed around \cite[Eqs.~(27)--(28)]{CEGH08}, and
$U_{AE}$ is a Gaussian unitary acting on $2m$ modes.

Given the above result, we can then generalize the argument from the previous
section to argue that the teleportation simulations of these channels converge
uniformly. Indeed, as discussed in \cite{WPG07} and as a generalization of the
single-mode case discussed previously, the teleportation simulation of a
Gaussian channel $\mathcal{N}_{X,Y}$\ realizes the Gaussian channel
$\mathcal{N}_{X,Y+\bar{\sigma}I}$, where $\bar{\sigma}>0$ is a parameter
characterizing the squeezing strength and the unideal detections involved in
the teleportation simulation. Thus, as before, the teleportation simulation of
a Gaussian channel simply acts as an additive-noise channel concatenated with
the original channel, and the effect is that the noise matrix for the channel
realized from the teleportation simulation is $Y+\bar{\sigma}I$, while the $X$ matrix is unaffected. Thus,
invoking \cite[Theorem~1]{CEGH08}, the teleportation simulation $\mathcal{N}%
_{X,Y+\bar{\sigma}I}$ can be realized as the following transformation:%
\begin{equation}
\rho_{A}\rightarrow\mathcal{N}_{X,Y}(\rho_{A})=\operatorname{Tr}_{E}%
\{U_{AE}(\rho_{A}\otimes\gamma_{E}(Y+\bar{\sigma}I))U_{AE}^{\dag}\}.
\end{equation}
Then we are led to the following theorem:

\begin{theorem}
Let $\mathcal{N}_{X,Y}$ be a multi-mode quantum Gaussian channel of the form in
\eqref{eq:G-chan-1}--\eqref{eq:G-chan-3}, such that $\Omega-X^{T}\Omega X$ is
full rank. Then its teleportation simulation converges uniformly, in the sense
that%
\begin{multline}
\sup_{\rho_{RA}}P[(\operatorname{id}_{R}\otimes\mathcal{N}_{X,Y})(\rho
_{RA}),(\operatorname{id}_{R}\otimes\mathcal{N}_{X,Y+\bar{\sigma}I})(\rho
_{RA})]\label{eq:multimode-unif-bnd}\\
\leq P(\gamma_{E}(Y),\gamma_{E}(Y+\bar{\sigma}I)),
\end{multline}
where $\gamma_{E}(Y)$ is defined in \eqref{eq:multi-mode-env},
\begin{equation}
\lim_{\bar{\sigma}\rightarrow0}P(\gamma_{E}(Y),\gamma_{E}(Y+\bar
{\sigma}I))=0,
\end{equation}
and one can use the explicit formula \cite[Section~IV]%
{PS00}\ for the fidelity of multi-mode, zero-mean Gaussian states to find an analytical expression for
\begin{equation}
P(\gamma_{E}(Y),\gamma_{E}(Y+\bar{\sigma}I))=\sqrt{1-F(\gamma_{E}%
(Y),\gamma_{E}(Y+\bar{\sigma}I))},
\end{equation}
for $\bar{\sigma}>0$.
\end{theorem}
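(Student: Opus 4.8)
The plan is to follow the same three-step template used above for the thermal, amplifier, and additive-noise channels, now leveraging the environment realization \eqref{eq:multi-mode-env} of $\mathcal{N}_{X,Y}$ supplied by \cite[Theorem~1]{CEGH08}. First I would fix an arbitrary input state $\rho_{RA}$, with $R$ a reference system attached to an arbitrary separable Hilbert space and $A$ the $m$-mode channel input, and observe that, because the teleportation simulation modifies only the noise matrix $Y\mapsto Y+\bar{\sigma}I$ while leaving $X$ untouched, both $\mathcal{N}_{X,Y}$ and $\mathcal{N}_{X,Y+\bar{\sigma}I}$ admit the dilation $\rho_{A}\mapsto\operatorname{Tr}_{E}\{U_{AE}(\rho_{A}\otimes\gamma_{E}(\cdot))U_{AE}^{\dagger}\}$ with the \emph{same} Gaussian unitary $U_{AE}$, differing only in the environment state, $\gamma_{E}(Y)$ versus $\gamma_{E}(Y+\bar{\sigma}I)$. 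At this point I would check the two routine facts that make this work: the hypothesis that $\Omega-X^{T}\Omega X$ is full rank is precisely what \cite[Theorem~1]{CEGH08} requires, and $Y+\bar{\sigma}I$ still satisfies \eqref{eq:Gaussian-CP-condition} since adding a positive semidefinite matrix to $Y$ preserves that inequality, so $\gamma_{E}(Y+\bar{\sigma}I)$ is a legitimate zero-mean Gaussian state with covariance matrix $K(Y+\bar{\sigma}I)K^{T}=KYK^{T}+\bar{\sigma}KK^{T}$.

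Second, I would run the data-processing and tensoring chain verbatim as in \eqref{eq:uniform-converge-bnd}: applying the quantum channel $\tau\mapsto\operatorname{Tr}_{E}\{U_{AE}\,\tau\,U_{AE}^{\dagger}\}$ to the two states $\rho_{RA}\otimes\gamma_{E}(Y)$ and $\rho_{RA}\otimes\gamma_{E}(Y+\bar{\sigma}I)$ and using monotonicity of $P$ under channels gives $P[(\operatorname{id}_{R}\otimes\mathcal{N}_{X,Y})(\rho_{RA}),(\operatorname{id}_{R}\otimes\mathcal{N}_{X,Y+\bar{\sigma}I})(\rho_{RA})]\le P(\rho_{RA}\otimes\gamma_{E}(Y),\rho_{RA}\otimes\gamma_{E}(Y+\bar{\sigma}I))$; then invariance of $P$ under tensoring in the common state $\rho_{RA}$ collapses the right-hand side to $P(\gamma_{E}(Y),\gamma_{E}(Y+\bar{\sigma}I))$. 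Since this bound is independent of $\rho_{RA}$, taking the supremum over all $\rho_{RA}$ establishes the uniform bound \eqref{eq:multimode-unif-bnd}.

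Third, for the convergence itself I would invoke the closed-form fidelity of zero-mean multi-mode Gaussian states from \cite[Section~IV]{PS00}, applied to the covariance matrices $KYK^{T}$ and $K(Y+\bar{\sigma}I)K^{T}$. Since $K(Y+\bar{\sigma}I)K^{T}\to KYK^{T}$ entrywise as $\bar{\sigma}\to 0$ and that formula is a continuous (algebraic) function of the covariance-matrix entries, one obtains $F(\gamma_{E}(Y),\gamma_{E}(Y+\bar{\sigma}I))\to 1$, hence $P(\gamma_{E}(Y),\gamma_{E}(Y+\bar{\sigma}I))\to 0$; alternatively this limit follows directly from the strong-convergence argument around \eqref{eq:fixed-state-converge} specialized to the fixed state $\gamma_{E}(Y)$. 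The only place requiring care — and it is bookkeeping rather than a genuine obstacle — is ensuring that \cite[Theorem~1]{CEGH08} indeed produces the \emph{same} dilating unitary for $Y$ and for $Y+\bar{\sigma}I$ (so that data processing is applied to a single channel, not a comparison of two different dilations), together with a one-line continuity check of the \cite{PS00} fidelity formula at $Y$; with those in hand the argument is no harder than the single-mode cases treated above.
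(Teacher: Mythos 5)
Your proposal is correct and follows essentially the same route as the paper's proof: invoke the common dilation from \cite[Theorem~1]{CEGH08} with the same Gaussian unitary $U_{AE}$ (valid since the teleportation simulation leaves $X$ fixed and only shifts $Y\mapsto Y+\bar{\sigma}I$), then apply data processing and invariance of $P$ under tensoring to reduce the channel distinguishability to that of the environment states. Your added bookkeeping --- checking that $Y+\bar{\sigma}I$ still satisfies \eqref{eq:Gaussian-CP-condition}, that the dilating unitary is indeed common to both channels, and the continuity argument via the \cite{PS00} formula for the limit $\bar{\sigma}\rightarrow 0$ --- is sound and slightly more explicit than the paper's treatment, but does not constitute a different approach.
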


\begin{proof}
The proof of this theorem follows the same strategy given in the previous
section, which in turn was used in \cite{TW16,SWAT17}. In detail, letting $\rho_{RA}$ be an arbitrary state, we have that%
\begin{align}
& P[(\operatorname{id}_{R}\otimes\mathcal{N}_{X,Y})(\rho_{RA}%
),(\operatorname{id}_{R}\otimes\mathcal{N}_{X,Y+\bar{\sigma}I})(\rho
_{RA})]\nonumber\\
& \leq P[U_{AE}(\rho_{RA}\otimes\gamma_{E}(Y))U_{AE}^{\dag},U_{AE}(\rho
_{RA}\otimes\gamma_{E}(Y^{\bar{\sigma}}))U_{AE}^{\dag}]\nonumber\\
& =P[\rho_{RA}\otimes\gamma_{E}(Y),\rho_{RA}\otimes\gamma_{E}(Y+\bar{\sigma
}I)]\nonumber\\
& =P[\gamma_{E}(Y),\gamma_{E}(Y+\bar{\sigma}I)],
\end{align}
where $Y^{\bar{\sigma}}\equiv Y+\bar{\sigma}I$. The justification of these
steps are the same as before, namely, data processing and unitary invariance.
The above bound is clearly a uniform bound, holding for all states $\rho_{RA}%
$, and so we conclude \eqref{eq:multimode-unif-bnd}.
\end{proof}

\section{Physical Interpretation with the CV Teleportation Game}

\label{sec:TP-game}

We can interpret the results in the previous sections of this paper in a
game-theoretic way, in order to further elucidate the physical meaning of the
two kinds of convergence that we have considered in this paper. Let us
consider a competitive game (call it the CV\ Teleportation Game) between a
Distinguisher and a Teleporter, while an independent Referee determines who
wins the game. At the outset, the Referee flips an unbiased coin and tells the
outcome to the Teleporter. If the coin outcome is heads, then the Teleporter
will apply the ideal channel. If it is tails, then he will apply the
CV\ teleportation protocol. The game is such that either the Distinguisher
reveals his strategy to the Teleporter beforehand, or vice versa. Furthermore,
the Referee always learns the strategies of both the Distinguisher and the
Teleporter. Everyone involved plays honestly. A particular instance of the
game is depicted in Figure~\ref{fig:CV-tele-game}.

\begin{figure}[ptb]
\begin{center}
\includegraphics[
width=3.5in
]{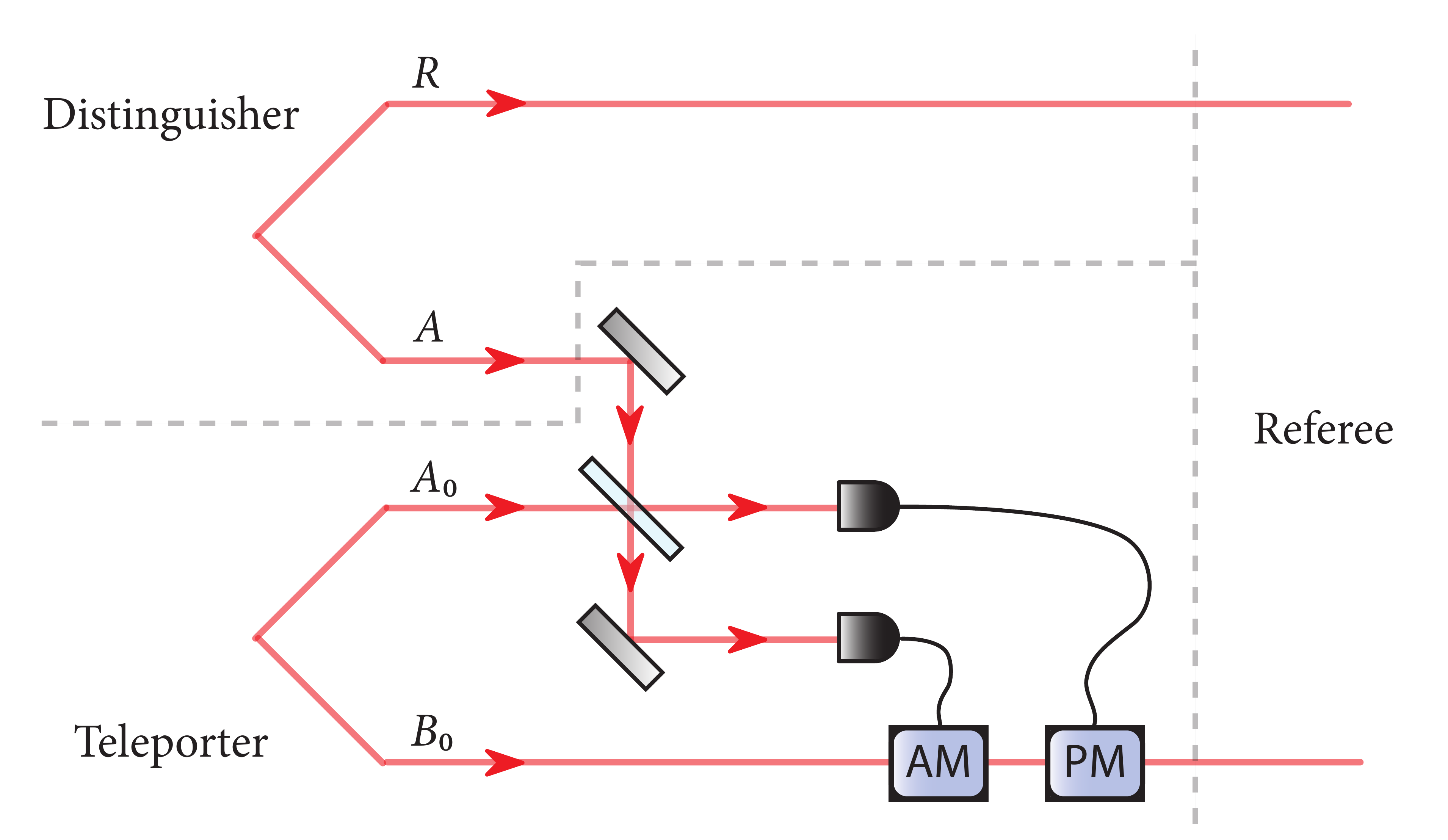}
\end{center}
\caption{Depiction of the CV Teleportation Game, in the case that the coin
outcome is tails, so that the Teleporter applies the continuous-variable
bosonic teleportation protocol to the mode $A$ that the Distinguisher sends.}%
\label{fig:CV-tele-game}%
\end{figure}

I now outline the full game in the case that the Distinguisher reveals his
strategy to the Teleporter. In this case, the Distinguisher picks a pure state
$\psi_{RA}$ and sends mode $A$ to the Teleporter and mode $R$ to the Referee.
The Distinguisher also reveals a classical description of the state $\psi
_{RA}$ to the Referee, and also to the Teleporter in this case. Based on this,
the Teleporter can compute the entanglement infidelity $\varepsilon
(\bar{\sigma},\psi_{RA})$ from \eqref{eq:ent-infidel} and can adjust the
teleportation imperfection $\bar{\sigma}>0$ of his setup accordingly such that
$\varepsilon(\bar{\sigma},\psi_{RA})\approx0 $. The Referee then reports the
coin flip outcome to the Teleporter. If heads, then the Teleporter does
nothing to mode $A$ (ideal channel); if tails, the Teleporter applies the
continuous-variable bosonic teleportation protocol to mode $A$. The Teleporter
sends the output mode to the Referee. The Referee then performs the optimal
binary measurement \cite{H69,H73hel,Hel76} to distinguish the two possible
resulting states. If the measurement outcome is \textquotedblleft
heads,\textquotedblright\ then the channel applied by the Teleporter is
decided to be the ideal channel. If the measurement outcome is
\textquotedblleft tails,\textquotedblright\ then the channel applied by the
Teleporter is decided to be the teleportation channel. This procedure is then
repeated a large number of times. If the fraction of rounds in which the coin
flips match exceeds $3/4$, then the Distinguisher wins. Otherwise, the
Teleporter wins.

To analyze the above physical setup, consider that the probability of
distinguishing the channels in any single round is given by
\cite{H69,H73hel,Hel76}%
\begin{equation}
\Pr\{X=Y\}=\frac{1}{2}\left(  1+\frac{1}{2}\left\Vert \psi_{RA}-\mathcal{T}%
_{A}^{\bar{\sigma}}(\psi_{RA})\right\Vert _{1}\right)  ,
\end{equation}
where $X$ is a Bernoulli random variable modeling the coin flip and $Y$ is a
Bernoulli random variable modeling the measurement outcome. In the above
described case in which the Distinguisher reveals his strategy, this means
that the Teleporter can follow and choose $\bar{\sigma}$ as small as needed to
guarantee that $\Pr\{X=Y\}<3/4$. Thus, with this structure to the game, the
Teleporter wins with high probability after a large number of repetitions. In
fact, based on well known relations between trace distance and fidelity
\cite{Kholevo1972,FG98}, and the result given in \eqref{eq:TP-infidelity}, we
have that%
\begin{equation}
\sup_{\psi_{RA}}\inf_{\bar{\sigma}>0}\left\Vert \psi_{RA}-\mathcal{T}%
_{A}^{\bar{\sigma}}(\psi_{RA})\right\Vert _{1}=0.
\end{equation}

Now suppose that the opposite scenario occurs in which the Teleporter reveals
his strategy (and commits to it). This means that the teleportation
imperfection $\bar{\sigma}>0$ is fixed at the outset. Then the Distinguisher
can choose his input state to be the two-mode squeezed vacuum state
$\Phi(N_{S})_{RA}$ such that $\varepsilon(\bar{\sigma},\Phi(N_{S}%
)_{RA})\approx1$, as considered in \eqref{eq:bad-converge}. This in turn means
that the Distinguisher can guarantee that $\Pr\{X=Y\}>3/4$. Thus, in this
case, the Distinguisher wins with high probability after a large number of
repetitions. In fact, in this latter case, as a consequence of
\eqref{eq:bad-converge}, we have that%
\begin{equation}
\inf_{\bar{\sigma}>0}\sup_{\psi_{RA}}\left\Vert \psi_{RA}-\mathcal{T}%
_{A}^{\bar{\sigma}}(\psi_{RA})\right\Vert _{1}=2.
\end{equation}

One may criticize whether the above game is truly physical. Indeed, it is
never possible in practice to apply the ideal channel. Depending on the
physical situation, the actual channel might be a pure-loss, thermal,
pure-amplifier, amplifier or additive-noise channel, for example (one could
further criticize ``pure-loss'' or ``pure-amplifier'', but let us leave that).
Let $\mathcal{G}_{A}$ denote one of these single-mode, phase-insensitive
bosonic Gaussian channels. Suppose instead that the game changes in the
following way: If the coin outcome is heads, then the Teleporter will apply
the channel $\mathcal{G}$. If it is tails, then he will apply the
teleportation simulation $\mathcal{G}_{A}\mathcal{\circ T}_{A}^{\bar{\sigma}%
}$\ of $\mathcal{G}_{A}$.

There is a striking, physically observable difference in this case. No matter
whether the Distinguisher reveals his strategy to the Teleporter, or the other
way around, the Teleporter always wins with high probability! This is a direct
consequence of the inequalities in \eqref{eq:loss-unif-bnd},
\eqref{eq:amp-unif-bnd-final}, and \eqref{eq:add-noise-unif-upp-bnd}. Indeed,
independent of the revealing, the Teleporter can simply compute the
teleportation imperfection $\bar{\sigma}>0$, while incorporating his knowledge
of the channel parameters, in order to always guarantee that $\Pr\{X=Y\}<3/4$.
Thus, as a consequence of the mathematical fact that the teleportation
simulations of these Gaussian channels converge both uniformly and strongly,
so that%
\begin{multline}
\inf_{\bar{\sigma}>0}\sup_{\psi_{RA}}\left\Vert \mathcal{G}_{A}\mathcal{(}%
\psi_{RA})-\mathcal{G}_{A}\mathcal{\circ T}_{A}^{\bar{\sigma}}(\psi
_{RA})\right\Vert _{1}\\
=\sup_{\psi_{RA}}\inf_{\bar{\sigma}>0}\left\Vert \mathcal{G}_{A}%
\mathcal{(}\psi_{RA})-\mathcal{G}_{A}\mathcal{\circ T}_{A}^{\bar{\sigma}}%
(\psi_{RA})\right\Vert _{1}=0,
\end{multline}
the \textit{physically observable consequence} is that the Teleporter always
has the advantage in this modified (and physically more realistic)\ version of
the CV\ Teleportation Game.

Note that other variations of the CV Teleportation Game are possible. One
could allow for the Distinguisher to employ entangled strategies among the
different rounds, or even adaptive channels in between rounds. The results
given in Propositions~\ref{prop:parallel-comp} and \ref{prop:sequential-comp}
can be used to analyze these other, richer variations of the CV Teleportation
Game, but here I will not go into the details.

\section{Non-asymptotic secret-key-agreement capacities}

\label{sec:non-asymp-SKC}I now briefly review secret-key-agreement capacities
of a quantum channel and some results from \cite{WTB16}. The
secret-key-agreement capacity of a quantum channel is equal to the optimal
rate at which a sender and receiver can use a quantum channel many times, as
well as a free assisting classical channel, in order to establish a reliable
and secure secret key. It is relevant in the context of quantum key
distribution \cite{bb84,SBPC+09}. More generally, since capacity is a limiting
notion that can never be reached in practice, one can consider a fixed
$(n,P^{\leftrightarrow},\varepsilon)$ secret-key-agreement protocol that uses
a channel $n$ times and has $\varepsilon$ error, while generating a secret key
at the rate $P^{\leftrightarrow}$ \cite{WTB16}. Such protocols were explicitly
discussed in \cite{TGW14,TGW14Nat,WTB16}, as well as the related developments
in \cite{Goodenough2015,AML16,Christandl2017,BA17,KW17,KW17a,RKBKMA17,TSW17}.
For a fixed integer $n$ and a fixed error $\varepsilon\in(0,1)$, the
non-asymptotic secret-key-agreement capacity of a quantum channel
$\mathcal{N}$ is written as $P_{\mathcal{N}}^{\leftrightarrow}(n,\varepsilon)$
and is equal to the optimal secret-key rate $P$ subject to these constraints
\cite{WTB16}. That is,%
\begin{equation}
P_{\mathcal{N}}^{\leftrightarrow}(n,\varepsilon)\equiv\sup\{P^{\leftrightarrow
}\ |\ (n,P^{\leftrightarrow},\varepsilon)\text{ is achievable using
}\leftrightarrow\},
\end{equation}
where $\leftrightarrow$ indicates the free use of LOCC between every use of
the quantum channel.

If an $(n,P^{\leftrightarrow},\varepsilon)$ protocol takes place over a
bosonic channel, the stated definition of a secret-key-agreement protocol
makes no restriction on the photon number of the channel input states, and as
such, it is called an unconstrained protocol. The corresponding capacity is
called the unconstrained secret-key-agreement capacity. For example, in such a
scenario, a sender and receiver could freely make use of the following
classically correlated Basel state with mean photon number equal to $\infty$:%
\begin{equation}
\overline{\beta}_{AB}\equiv\frac{6}{\pi^{2}}\sum_{n=1}^{\infty}\frac{1}{n^{2}%
}|n\rangle\langle n|_{A}\otimes|n\rangle\langle n|_{B} ,
\label{eq:basel-state-sep}%
\end{equation}
which represents a dephased version of the state in
\eqref{eq:entangled-Basel-state}. Even though it is questionable whether such
states are physically realizable in practice, they are certainly normalizable,
and thus allowed to be used in principle in an unconstrained
secret-key-agreement protocol.

One of the main results of \cite{WTB16} is the following bound on
$P_{\mathcal{L}_{\eta}}^{\leftrightarrow}(n,\varepsilon)$ when the channel is
taken to be a pure-loss channel $\mathcal{L}^{\eta}$ of transmissivity
$\eta\in(0,1)$:%
\begin{equation}
P_{\mathcal{L}_{\eta}}^{\leftrightarrow}(n,\varepsilon)\leq-\log_{2}%
(1-\eta)+C(\varepsilon)/n, \label{eq:WTB-bnd}%
\end{equation}
where%
\begin{equation}
C(\varepsilon)\equiv\log_{2}6+2\log_{2}(\left[  1+\varepsilon\right]  /\left[
1-\varepsilon\right]  ).
\end{equation}
This bound was established by proving that $P^{\leftrightarrow}\leq-\log
_{2}(1-\eta)+C(\varepsilon)/n$ for any fixed $(n,P^{\leftrightarrow
},\varepsilon)$ unconstrained protocol. As a consequence of this
\textit{uniform} bound, one can then take a supremum over all
$P^{\leftrightarrow}$ such that there exists an $(n,P^{\leftrightarrow
},\varepsilon)$ protocol and conclude \eqref{eq:WTB-bnd}, as was done in the
proof of \cite[Theorem~24]{WTB16}. Similar reasoning was employed in
\cite{WTB16}\ in order to arrive at bounds on the unconstrained capacities of
other bosonic Gaussian channels.

A critical tool used to establish \eqref{eq:WTB-bnd} is the simulation of a
quantum channel via teleportation \cite[Section~V]{BDSW96} (see also
\cite[Theorem 14 \&\ Remark~11]{Mul12}), which, as discussed previously, has
been extended to bosonic states and channels \cite{NFC09,WPG07}, by making use
of the well known bosonic teleportation protocol from \cite{prl1998braunstein}%
. More generally, one can allow for general local operations and classical
communication (LOCC)\ when simulating a quantum channel from a resource state
\cite[Eq.~(11)]{HHH99}, known as LOCC\ channel simulation. This tool is used
to reduce any arbitrary LOCC-assisted protocol over a teleportation-simulable
channel to one in which the LOCC assistance occurs after the final channel
use. Another critical idea is the reinterpretation of a three-party
secret-key-agreement protocol as a two-party private-state generation protocol
and employing entanglement measures such as the relative entropy of
entanglement as a bound for the secret-key rate \cite{HHHO05,HHHO09}. Finally,
one can employ the Chen formula for the relative entropy of Gaussian states
\cite{PhysRevA.71.062320}, as well as a formula from \cite{WTLB16} for the
relative entropy variance of Gaussian states. These tools were foundational
for the results of  \cite[Theorem~24]{WTB16}\ in order to
argue for bounds on secret-key-agreement capacities of bosonic Gaussian channels.

%In \cite{P17}, Pirandola \textit{et al}.~have criticized
%the mathematical proof from \cite{WTB16} of the bound in \eqref{eq:WTB-bnd}.
%In particular, they have misinterpreted the proof of \cite[Theorem~24]{WTB16},
%by reversing important logical quantifiers present in the proof of
%\cite[Theorem~24]{WTB16}. Crucially, they have misunderstood
A key point mentioned above, which is critical to and clearly stated in the
proof of \cite[Theorem~24]{WTB16}, is as follows: the proof begins by
considering a fixed $(n,P^{\leftrightarrow},\varepsilon)$ protocol and then
establishes a uniform bound on $P^{\leftrightarrow}$, independent of the
details of the particular protocol.
%As a consequence of this misinterpretation and their incorrect
%calculation in \cite[Appendix~A]{P17}, Pirandola \textit{et al}.~were led to
%incorrect conclusions. Furthermore,
The discussion given in this paper clarifies that strong convergence in
teleportation simulation suffices for the proof of \cite[Theorem~24]{WTB16}. Furthermore,
 there is no need to invoke the energy-constrained diamond distance
\cite{Sh17,Win17}\ in order to establish the correctness of the proof.
%, contrary to the claims of \cite{P17}.
%Finally, to
%reiterate, I have also proven in Section~\ref{sec:unif-conv-loss-amp}\ that
%the teleportation simulations of pure-loss, thermal, pure-amplifier,
%amplifier, and additive-noise channels converge uniformly to the original
%channels, and this latter result proves that Pirandola \textit{et al}.~have
%based their main criticism of the proof of \cite[Theorem~24]{WTB16} on an
%incorrect calculation in \cite[Appendix~A]{P17}.

\section{Detailed review of the proof of Theorem 24 in WTB17}

\label{sec:proof-review}

We can now step through the relevant parts of the proof of \cite[Theorem~24]%
{WTB16} carefully, in order to clarify its correctness. It is worthwhile to do so, given that Ref.~\cite{P17} recently questioned the proof of \cite[Theorem~24]%
{WTB16}. I highlight, in
italics, quotations from the proof of \cite[Theorem~24]{WTB16} for clarity and
follow each quotation with a brief discussion. The proof begins by stating that:

\textit{\textquotedblleft First, consider an arbitrary }$(n,P^{\leftrightarrow
},\varepsilon)$\textit{ protocol for the thermalizing channel }$\mathcal{L}%
_{\eta,N_{B}}$\textit{. It consists of using the channel }$n$\textit{~times
and interleaving rounds of LOCC\ between every channel use. Let }$\zeta
_{\hat{A}\hat{B}}^{n}$\textit{ denote the final state of Alice and Bob at the
end of this protocol.\textquotedblright}

It is crucial to note here that this is saying, as it is written, that we
should really start with a particular, fixed $(n,P^{\leftrightarrow
},\varepsilon)$ protocol. This does not mean that we should be considering a
sequence of such protocols
or protocols involving unnormalizable states.
%, as is misinterpreted in \cite{P17}.
Thus, proceeding by fixing an $(n,P^{\leftrightarrow},\varepsilon)$ protocol,
the proof continues with

\textit{\textquotedblleft By the teleportation reduction procedure [...], such
a protocol can be simulated by preparing }$n$\textit{\ two-mode squeezed
vacuum (TMSV) states each having energy }$\mu-1/2$\textit{ (where we think of
}$\mu\geq1/2$\textit{ as a very large positive real), sending one mode of each
TMSV through each channel use, and then performing continuous-variable quantum
teleportation \cite{prl1998braunstein} to delay all of the LOCC\ operations
until the end of the protocol. Let }$\rho_{\eta,N_{B}}^{\mu}$\textit{ denote
the state resulting from sending one share of the TMSV\ through the
thermalizing channel, and let }$\zeta_{\hat{A}\hat{B}}^{\prime}(n,\mu
)$\textit{ denote the state at the end of the simulation.\textquotedblright}

This means, as it states, that the fixed protocol can be simulated with some
error by replacing each channel use of $\mathcal{L}_{\eta,N_{B}}$\ with the
simulating channel $\mathcal{L}_{\eta,N_{B}}^{\mu}$, for $\mu\in
\lbrack0,\infty)$. Here, the original channel $\mathcal{L}_{\eta,N_{B}}$ is in
correspondence with $\mathcal{G}_{A}$ from Section~\ref{sec:tele-sim},
$\mathcal{L}_{\eta,N_{B}}^{\mu}$ with $\mathcal{G}_{A}^{\bar{\sigma}}$, and
$\mu$ with $\bar{\sigma}$, in the sense that the limit $\mu\to\infty$ is in
correspondence with the limit $\bar{\sigma}\to0$. Furthermore, since this is
an LOCC\ simulation and the original protocol consists of channel uses of
$\mathcal{L}_{\eta,N_{B}}$\ interleaved by LOCC, it is possible to write the
simulating protocol as one that consists of a single round of LOCC on the
state $[\rho_{\eta,N_{B}}^{\mu}]^{\otimes n}$, as observed in
\cite{BDSW96,NFC09,Mul12}. Continuing,

\textit{\textquotedblleft Let $\varepsilon_{\operatorname{TP}}(n,\mu)$ denote
the \textquotedblleft infidelity\textquotedblright\ of the simulation:%
\begin{equation}
\varepsilon_{\operatorname{TP}}(n,\mu)\equiv1-F(\zeta_{\hat{A}\hat{B}}%
^{n},\zeta_{\hat{A}\hat{B}}^{\prime}(n,\mu)).\ \ \ \text{\textquotedblright}%
\end{equation}
}

In the context of the proof, this infidelity clearly corresponds to the
infidelity of the simulation of the fixed protocol. One might argue that the
notation $\varepsilon_{\operatorname{TP}}(n,\mu)$ somehow hides this
dependence on a fixed protocol, but the dependence of $\varepsilon
_{\operatorname{TP}}(n,\mu)$ on a fixed protocol is clear from the context and
the fact that the quantity $1-F(\zeta_{\hat{A}\hat{B}}^{n},\zeta_{\hat{A}%
\hat{B}}^{\prime}(n,\mu))$ itself clearly depends on a fixed protocol as
stated. This infidelity is similar to that in \eqref{eq:adap-err}, except the
initial state of the protocol is constrained to be a separable, unentangled
state shared between the sender and receiver, and each channel $\mathcal{A}%
^{(j)}$ in the protocol is constrained to be an LOCC\ channel between the
sender and receiver. Continuing,

\textit{\textquotedblleft Due to the fact that continuous-variable
teleportation induces a perfect quantum channel when infinite energy is
available \cite{prl1998braunstein}, the following limit holds for every $n$:%
\begin{equation}
\limsup_{\mu\rightarrow\infty}\varepsilon_{\operatorname{TP}}(n,\mu
)=0.\ \ \ \text{\textquotedblright}%
\end{equation}
}

This is indeed a key step for the proof. In the context of the proof given,
the convergence is as it is written and is to be understood in the strong
sense of \eqref{eq:strong-converge-tele-adap}:\ for a \textit{fixed protocol}
used in conjunction with the $n$ teleportation simulations, the infidelity
converges to zero in the limit of ideal squeezing and ideal detection (as
$\mu\rightarrow\infty$ or, equivalently, as $\bar{\sigma}\rightarrow0$). Note
that here we are employing the notion of strong convergence in
\eqref{eq:strong-converge-tele-adap}, given that the original protocol has
LOCC channels interleaved between every use of~$\mathcal{L}_{\eta,N_{B}}$.

%It is this key step that Pirandola \textit{et al}.~have misinterpreted in
%\cite{P17}, and it is clear that they have misunderstood the fact that a fixed
%protocol is being considered throughout these steps of the proof, even though
%it is written as the first step. They have instead suggested that WTB17 was
%somehow considering a sequence of protocols or \textquotedblleft asymptotic
%states\textquotedblright\ and was applying the uniform topology at this point
%in the proof, in spite of the fact that WTB17 clearly had been considering a
%fixed protocol throughout the proof of \cite[Theorem~24]{WTB16}. Thus, the
%assertions of Pirandola \textit{et al}.~in \cite{P17} amount to a scrambling
%or mangling of the logical flow of the proof of \cite[Theorem~24]{WTB16}.
%Furthermore, I have proven in Section~\ref{sec:unif-conv-loss-amp}\ that the
%teleportation simulation of the thermal channel $\mathcal{L}_{\eta,N_{B}}$
%converges uniformly to it, and this highlights another important point,
%namely, that Pirandola \textit{et al}.'s main criticism is based on an
%incorrect calculation in \cite[Appendix~A]{P17}.

Continuing the proof, \textit{\textquotedblleft By using that $\sqrt
{1-F(\rho,\sigma)}$ is a distance measure for states $\rho$ and $\sigma$ (and
thus obeys a triangle inequality) [...], the simulation leads to an
$(n,P^{\leftrightarrow},\varepsilon(n,\mu))$ protocol for the thermalizing
channel, where%
\begin{equation}
\varepsilon(n,\mu)\equiv\min\!\left\{  1,\left[  \sqrt{\varepsilon}%
+\sqrt{\varepsilon_{\operatorname{TP}}(n,\mu)}\right]  ^{2}\right\}  .
\end{equation}
Observe that $\limsup_{\mu\rightarrow\infty}\varepsilon(n,\mu)=\varepsilon$,
so that the simulated protocol has equivalent performance to the original
protocol in the infinite-energy limit.\textquotedblright}

This last part that I have recalled is saying how the error of the simulating
protocol is essentially equal to the sum of the error of the original protocol
and the error from substituting the original $n$ channels with $n$ unideal
teleportations. It is a straightforward consequence of the triangle inequality
for the metric $\sqrt{1-F}$, as recalled there, and thus captures a correct
propagation of errors.
%, in contradiction to what Pirandola \textit{et al}.~have
%attested in~\cite{P17}.

From this point on in the proof, using other techniques,
%that have not been
%disputed by Pirandola \textit{et al}.~in \cite{P17} (and so will not be
%reviewed in detail here),
the following bound is concluded on the rate $P^{\leftrightarrow}$\ of the
fixed $(n,P^{\leftrightarrow},\varepsilon)$\ protocol by invoking the
meta-converse in \cite[Theorem~11]{WTB16}:%
\begin{multline}
P^{\leftrightarrow}\leq D(\rho_{\eta,N_{B}}^{\mu}\Vert\sigma_{\eta,N_{B}}%
^{\mu})+\sqrt{\frac{2V(\rho_{\eta,N_{B}}^{\mu}\Vert\sigma_{\eta,N_{B}}^{\mu}%
)}{n(1-\varepsilon(n,\mu))}}\\
+C(\varepsilon(n,\mu))/n.
\end{multline}
This bound holds for all $\mu$ sufficiently large, $\varepsilon\in(0,1)$, and
positive integers $n$, and as such, it is a uniform bound. Given the uniform
bound, the limit $\mu\rightarrow\infty$ is then taken to arrive at%
\begin{equation}
P^{\leftrightarrow}\leq-\log\!\left(  \left(  1-\eta\right)  \eta^{N_{B}%
}\right)  -g(N_{B})+\sqrt{\frac{2V_{\mathcal{L}_{\eta,N_{B}}}}{n\left(
1-\varepsilon\right)  }}+\frac{C(\varepsilon)}{n}. \label{eq:final-bnd}%
\end{equation}
Since this latter bound is itself a uniform bound, holding for all
$(n,P^{\leftrightarrow},\varepsilon)$ protocols, it is then concluded that%
\begin{multline}
P_{\mathcal{L}_{\eta,N_{B}}}^{\leftrightarrow}(n,\varepsilon)\leq
-\log\!\left(  \left(  1-\eta\right)  \eta^{N_{B}}\right)  -g(N_{B})\\
+\sqrt{\frac{2V_{\mathcal{L}_{\eta,N_{B}}}}{n\left(  1-\varepsilon\right)  }%
}+\frac{C(\varepsilon)}{n}.
\end{multline}
Further arguments are given in the proof of \cite[Theorem~24]{WTB16}\ to
establish the bound in \eqref{eq:WTB-bnd} for the pure-loss channel.
%, but these
%also have not been disputed by Pirandola \textit{et al}.~in \cite{P17}, and so
%these further arguments will not be reviewed in any detail here.

To summarize, the original proof of \cite[Theorem~24]{WTB16} as given there is
correct as it is written. The proof of \cite[Theorem~24]{WTB16} as written
there establishes that for a fixed $(n,P^{\leftrightarrow},\varepsilon)$
protocol, the bound in \eqref{eq:final-bnd}\ holds.
%Pirandola \textit{et
%al}.'s assertions in \cite{P17}\ are based on a reversal of the logical
%quantifiers given in the proof of \cite[Theorem~24]{WTB16}, as well as an
%incorrect calculation in \cite[Appendix~A]{P17}. That is, Pirandola \textit{et
%al}.~have misinterpreted the logical flow of the proof of \cite[Theorem~24]%
%{WTB16}\ to suggest that WTB17 was considering convergence of teleportation
%simulations in the uniform sense, but this was not the case.
Furthermore,
%Pirandola \textit{et al}.~have also claimed in \cite{P17}
one might think that a proof may \textit{only} be given by employing the
energy-constrained diamond distance, but this is clearly not the case either,
as demonstrated above.

\section{Conclusion}

\label{sec:conclusion}

The continuous-variable, bosonic quantum teleportation protocol from
\cite{prl1998braunstein} is often loosely stated to simulate an ideal quantum
channel in the limit of infinite squeezing and ideal homodyne detection. The
precise form of convergence is typically not clarified in the literature, and
as a consequence, this has the potential to lead to confusion in mathematical
proofs that employ this protocol.

This paper has clarified various notions of channel convergence, with
applications to the continuous-variable bosonic teleportation protocol from
\cite{prl1998braunstein}, and extended these notions to various contexts. This
paper provided an explicit proof that the continuous-variable bosonic
teleportation protocol from \cite{prl1998braunstein} converges
\textit{strongly} to an ideal quantum channel in the limit of ideal squeezing
and detection. At the same time, this paper proved that this protocol does
\textit{not} converge \textit{uniformly} to an ideal quantum channel, and this
highlights the role of the present paper in providing a precise clarification
of the convergence that occurs in the continuous-variable bosonic
teleportation protocol from \cite{prl1998braunstein}. I also proved that the
teleportation simulations of the pure-loss, thermal, pure-amplifier,
amplifier, and additive-noise channels converge both \textit{strongly} and
\textit{uniformly} to the original channels, which is in contrast to what
occurs in the teleportation simulation of the ideal channel. I suspect that
the explicit uniform bounds in \eqref{eq:loss-unif-bnd},
\eqref{eq:amp-unif-bnd-final}, and \eqref{eq:add-noise-unif-upp-bnd}, on the
accuracy of the teleportation simulations of these channels, will be useful in
future applications. The uniform convergence results were then generalized to the teleportation simulations of particular multi-mode bosonic Gaussian channels. Finally, I gave a physical interpretation of the
convergence results discussed in this paper, by means of the CV Teleportation
Game of Section~\ref{sec:TP-game}.

I also reviewed the proof of \cite[Theorem~24]{WTB16} and confirmed its
correctness as it is written there.
%At the same time, I pointed out how
%Pirandola \textit{et al}.~\cite{P17}\ have misinterpreted the proof of
%\cite[Theorem~24]{WTB16}, and I also proved that their main criticism of
%\cite{WTB16} is based on an incorrect calculation in \cite[Appendix~A]{P17}.
%Pirandola \textit{et al}.~have also claimed \cite{P17} that
One might think that it is necessary to use the energy-constrained diamond
distance to arrive at a proof of \eqref{eq:WTB-bnd}, but this is clearly not
the case.

It would be interesting in future work to explore physical scenarios of
interest in which different topologies of convergence lead to physically
distinct outcomes, as was the case in the CV Teleportation Game. Recent work
in this direction is available in \cite{Sh17a} and \cite{Win17}.

\begin{acknowledgments}
I acknowledge discussions with several of my dear colleagues, as well as
support from the Office of Naval Research. I am also indebted to an anonymous referee for suggesting the generalization of the uniform convergence of the teleportation simulations of thermal, amplifier, and additive-noise channels to the case of particular multi-mode bosonic Gaussian channels.
\end{acknowledgments}

\bibliographystyle{unsrt}
\bibliography{Ref}

\end{document}